\documentclass[12pt, draftclsnofoot, onecolumn]{IEEEtran}

\usepackage{enumitem}

\usepackage[dvipsnames]{xcolor}

\usepackage{amssymb}

\usepackage[most]{tcolorbox}
\tcbset{on line, 
	boxsep=4pt, left=0pt,right=0pt,top=0pt,bottom=0pt,
	colframe=white,colback=gray!10,  
	highlight math style={enhanced}
}

\usepackage[noadjust]{cite}
\usepackage{filecontents}

\let\svtikzpicture\tikzpicture
\def\tikzpicture{\noindent\svtikzpicture}

\usepackage{amsbsy}
\usepackage{bm}
\usepackage{fixmath}
\usepackage{silence}
\usepackage{ctable}
\usepackage{pst-node,pst-plot}
\usetikzlibrary{shapes.geometric}
\usepackage{mathdots}
\usetikzlibrary{math}

\usepackage{upgreek}

\usepackage{autobreak}
\allowdisplaybreaks

\usepackage[caption=false]{subfig}
\usepackage[thinc]{esdiff}
\usepackage{commath,amsmath,amssymb,amsfonts}
\usepackage{mathtools}
\usepackage{amsthm}
\usepackage{algorithmic}
\usepackage{pgfplots} 
\pgfplotsset{compat=1.15}
\usepackage{graphicx}
\usepackage{pgfgantt}
\usepackage{pdflscape}
\usepackage{pst-plot}
\usepackage{xfrac}
\usepackage{colortbl}
\usepackage{cancel}
\usepgfplotslibrary{fillbetween}
\usepackage{amssymb,bm}
\usepackage{float}
\usepackage{amsmath}
\usepackage[draft=false]{hyperref}
\usepackage{multirow}

\usepackage{mathrsfs}
\usepackage{bbm}

\usepackage{cite} 

\usepackage{comment}

\usepackage{autobreak}
\allowdisplaybreaks

\def \D{\mathcal{D}}

\def \S{\mathcal{S}}

\def \fa{\mathbf{a}}
\def \fA{\mathbf{A}}
\def \fB{\mathbf{B}}

\def \fU{\mathbf{U}}
\def \fV{\mathbf{V}}

\def \fb{\mathbf{b}}
\def \fc{\mathbf{c}}

\def \fC{\mathbf{C}}
\def \fd{\mathbf{d}}

\def \fm{\mathbf{m}}

\def \fH{\mathbf{H}}
\def \fI{\mathbf{I}}

\def \fV{\mathbf{V}}
\def \fx{\mathbf{x}}
\def \fy{\mathbf{y}}
\def \fz{\mathbf{z}}

\def \fU{\mathbf{U}}

\def \fP{\mathbf{P}}

\def \fV{\mathbf{V}}
\def \fX{\mathbf{X}}
\def \fY{\mathbf{Y}}
\def \fZ{\mathbf{Z}}
\def \f0{\mathbf{0}}

\definecolor{blau_1b}{RGB}{0,90,169}
\definecolor{blau_2b}{RGB}{0,131,204}

\definecolor{mycolor7}{RGB}{152,205,225}
\definecolor{mycolor8}{RGB}{242,204,48}
\definecolor{mycolor9}{rgb}{0,.5,0}
\definecolor{mycolor10}{rgb}{.59,.44,.09}
\definecolor{mycolor11}{RGB}{231,199,31} 
\definecolor{mycolor12}{RGB}{8,133,161} 

\definecolor{cambridgeblue}{rgb}{0.64, 0.76, 0.68}
\definecolor{darkturquoise}{rgb}{0.0, 0.81, 0.82}
\definecolor{turquoiseblue}{rgb}{0.0, 1.0, 0.94}
\definecolor{turquoisegreen}{rgb}{0.63, 0.84, 0.71}
\definecolor{turquoise}{rgb}{0.19, 0.84, 0.78}
\definecolor{verdigris}{rgb}{0.26, 0.7, 0.68}
\definecolor{wheat}{rgb}{0.96, 0.87, 0.7}

\usepackage{accents}

\theoremstyle{remark} \newtheorem{theorem}{Theorem}
\theoremstyle{remark} 
\theoremstyle{remark} \newtheorem{definition}{Definition}
\theoremstyle{remark} 
\theoremstyle{remark} 
\theoremstyle{remark} \newtheorem{conjecture}{Conjecture}

\newtheorem{innercustomgeneric}{\customgenericname}
\providecommand{\customgenericname}{}
\newcommand{\newcustomtheorem}[2]{%
	\newenvironment{#1}[1]
	{%
		\renewcommand\customgenericname{#2}%
		\renewcommand\theinnercustomgeneric{##1}%
		\innercustomgeneric
	}
	{\endinnercustomgeneric}
}
\newcustomtheorem{customcorollary}{Corollary}
\newcustomtheorem{customlemma}{Lemma}
\newcustomtheorem{customproposition}{Proposition}

\usepackage{amssymb}

\usepackage{autobreak}
\allowdisplaybreaks

\usepackage{tikz}
\usetikzlibrary{arrows.meta}

\usepackage{blochsphere}

\usepackage{multicol, blindtext}

\usepackage[]{pgfplots}
\usepackage{pgfplotstable}
\usepgfplotslibrary{dateplot}
\pgfplotsset{/pgf/number format/use comma,compat=newest}

\usepackage{amsbsy}
\usepackage{bm}
\usepackage{fixmath}
\usepackage{silence}
\usepackage{ctable}
\usepackage{pst-node,pst-plot}
\usetikzlibrary{shapes.geometric}
\usepackage{mathdots}
\usetikzlibrary{math}

\usepackage{feyn}
\usepackage{mathtools,multirow}

\def\fc{{\bf c}}
\def\fA{{\bf A}}
\def\fU{{\bf U}}
\def\fV{{\bf V}}

\def\fB{{\bf B}}
\def\fx{{\bf x}}

\def\fSigma{\bm{\mathrm{\Sigma}}}

\usetikzlibrary{arrows, arrows.meta}
\usepackage{pgfplots}
\pgfplotsset{compat=1.15}
\usepgfplotslibrary{polar}

\usepackage[bottom]{footmisc}

\usepackage{cleveref}
\crefname{equation}{Eq}{Eqs}
\crefformat{equation}{[#2#1#3]}
\crefrangelabelformat{equation}{[#3#1#4]--[#5#2#6]}

\usepackage{tcolorbox}

\usepackage{lipsum}   
\usepackage{framed}
\usepackage{tikz}
\usetikzlibrary{decorations.pathmorphing,calc}
\pgfmathsetseed{1} 
\pgfdeclarelayer{background}
\pgfsetlayers{background,main}

\def \bG {\mathcal{G}_{\fH}}

\usepackage{nccmath}

\usepackage[svgnames]{xcolor}
\usepackage{tikz}
\usepackage{mathtools}

\def\nudge{.5}

\tikzset{axis/.style={ultra thick, Red!75!black, -latex, shorten <=-\nudge cm, shorten >=-2*\nudge cm}}
\tikzset{line/.style={thick,Green}}

\allowdisplaybreaks

\def\fc{{\bf c}}
\def\fA{{\bf A}}
\def\fU{{\bf U}}
\def\fV{{\bf V}}

\def\fB{{\bf B}}
\def\fx{{\bf x}}

\def\fSigma{\boldsymbol{\mathbf{\Sigma}}}

\makeatletter
\patchcmd{\@maketitle}
{\addvspace{0.5\baselineskip}\egroup}
{\addvspace{-.5\baselineskip}\egroup}
{}
{}

\usepackage[letterpaper,bindingoffset=0.2in,
left=.7in,right=.7in,top=.9in,bottom=1.1in,
footskip=.2in]{geometry}

\usepackage{xpatch}
\xapptocmd{\appendix}{%
}{}{\PatchFailed}

\begin{document}
 
 \title{Gaussian Identification in MIMO Channels}
 
 \author{
 	\vspace{0.5cm}
 	\fontsize{13}{13} \selectfont \IEEEauthorblockN{Mohammad Javad Salariseddigh\IEEEauthorrefmark{1} and Christian Deppe\IEEEauthorrefmark{2}
 	}
 	\\
 	\vspace{0.8cm} \fontsize{12}{12} \selectfont
	\IEEEauthorrefmark{1}Technical University of Darmstadt\\
	\IEEEauthorrefmark{2}Technical University of Braunschweig\vspace{8mm}
	\\
	\fontsize{12}{12} \selectfont \itshape
	\textcolor{verdigris!80!gray}{Unto Ibn Sina (Avicenna),\\ that eternal flame of intellect, whose light outlasts the stars (c.~980--1037).} \vspace{-0mm}
 	}

\maketitle

	\begin{abstract}
		We study the identification capacity of discrete-time Gaussian multiple-input multiple-output (MIMO) channels under per-antenna, per-symbol peak-amplitude constraints. We establish upper and lower bounds on the identification capacity that scale linearly with the channel rank $r.$ The number of reliably identifiable messages can grow super-exponentially with blocklength $n,$ following $M_n(R)=2^{(n\log n)R},$ where $R$ is an achievable rate. The channel rank provides a spatial-multiplexing advantage, enabling higher achievable rates and, consequently, stronger super-exponential codebook growth. These results characterize the rank-dependent gain in achievable identification performance.
	\end{abstract}

	\section{Introduction}
	
		The fundamental limits of communication depend not only on the statistical properties of the underlying channel but also on the objective imposed on the communication system. While classical channel coding concerns the reliable reconstruction of transmitted messages \cite{S48,Shannon1957}, identification coding addresses the different task of determining whether a particular message was transmitted \cite{AD89,AH08G,Ahlswede06gen}. This distinction leads to markedly different codebook-growth laws. In particular, with randomized encoders (RI) \cite{AD89}, the number of reliably identifiable messages can grow doubly exponentially with the blocklength \(n\), according to \(M_n(R) = 2^{2^{nR}}\). Deterministic identification (DI) \cite{AN99,Salariseddigh_IT} exhibits a fundamentally different codebook-growth regime from its randomized counterpart. In particular, continuous-alphabet channels \cite{Salariseddigh_IT,Colomer_2025} follow super-exponential growth. Specifically, for a broad class of channel models, including Gaussian channels with fading \cite{Salariseddigh_IT,Vorobyev25,Yuan22}, Poisson channels with and without memory and their affine extensions \cite{Salariseddigh-TMBMC,Salariseddigh_OJCOMS_23,Salariseddigh25_ITW}, the number of identifiable messages scales as \(M_n(R) = 2^{(n\log n)R}\). These results demonstrate that the fundamental limits of identification, particularly the growth of the codebook with blocklength, depend on the underlying channel model and the associated channel characteristics. This observation motivates the study of identification over multiple-input multiple-output (MIMO) channels and raises the question of how their spatial structure, channel gains, and fading characteristics influence the fundamental limits of identification and the design of identification codes. This remarkable scaling has motivated renewed interest in identification within post-Shannon theory, goal-oriented communication paradigms and semantic communication frameworks \cite{Goldreich12,Salariseddigh23_BSC_Future_Internet}, as well as in the development of explicit identification code constructions \cite{KT99,Verdu02,Zinoghli24,Vorobyev25,Lengerke25}.
		
		The development of MIMO communication was driven by the growing demand for data under limited spectrum and transmit-power resources. Unlike conventional approaches such as increasing bandwidth, using higher-order modulation, densifying networks, or raising transmit power, MIMO improves capacity by exploiting the spatial dimension of the wireless channel \cite{telatar1999capacity,foschini1996layered}. For an $N_T \times N_R$ MIMO link, the number of independent spatial data streams is limited by the channel rank: $$r = \operatorname{rank}(\mathbf{H}) \leq \min(N_T,N_R).$$ Thus, a full-rank channel can provide up to $\min(N_T,N_R)$ spatial degrees of freedom, although antenna correlation, limited angular spread, line-of-sight propagation, and other channel conditions may reduce this number \cite{telatar1999capacity}. MIMO communication employs multiple antennas at the transmitter, receiver, or both to exploit the spatial structure of a wireless propagation channel \cite{Marzetta10,TseViswanath2005,biglieri2007mimo}. Unlike a single-input single-output (SISO) system, which is described by a single complex channel coefficient, a MIMO system is characterized by a channel matrix representing the propagation paths between transmit--receive antenna pairs. This spatial structure enables two complementary techniques. \emph{Spatial diversity} transmits redundant signal representations across multiple antennas or paths, improving robustness against fading. \emph{Spatial multiplexing}, in contrast, transmits independent data streams simultaneously over the same time--frequency resource. When the channel matrix has sufficient rank and the receiver has adequate channel-state information and processing capability, these streams can be separated to increase spectral efficiency \cite{FoschiniGans1998,telatar1999capacity,ZhengTse2003}. Massive MIMO systems consider wireless links equipped with a very large number of transmit and/or receive antennas \cite{Hoydis11,Hoydis13,Wang24} and therefore provides a natural framework for identification, particularly in rich-scattering scenarios such as i.i.d. Rayleigh fading \cite{marzetta1999capacity}, where the channel matrix is full rank with probability one and the resulting spatial degrees of freedom can dramatically increase the identification codebook size.
		
		The Gaussian MIMO channel constitutes a canonical model for modern wireless communication systems \cite{Proakis01,G05}. In particular, the Gaussian model isolates the role of the channel matrix and its spatial degrees of freedom, making channel rank a natural parameter for studying the impact of spatial dimensions on communication. For the classical Shannon transmission problem, consider a point-to-point Gaussian MIMO channel with channel matrix $\mathbf H$. At high SNR, its capacity satisfies \cite{telatar1999capacity} $\mathbb C = r \log(\mathrm{SNR})+O(1).$ Thus, the rank determines the number of independent spatial dimensions and the corresponding spatial multiplexing gain. At the first-order scale, the logarithm of the Shannon codebook size satisfies $\log M_n = n\mathbb C + o(n) .$ Thereby, using Telatar's high-SNR characterization, $\log M_n = nr\log(\mathrm{SNR})+O(n).$ Thus, the channel rank appears directly as the coefficient of the leading-order growth of the logarithm of the achievable codebook size. This raises the question of whether channel rank plays an analogous role in identification. Despite the fundamentally different operational objectives of identification and message transmission, our results show that the achievable identification rates scale also linearly with the channel rank. Thus, although spatial multiplexing has different operational interpretations in the two settings, the channel rank $r$ plays an analogous structural role: it determines the multiplicative factor in the leading-order growth of the logarithm of the achievable message-set size. Relative to a rank-one channel, this yields an asymptotic $r$-fold spatial gain, with the growth scaling as $n\log(\mathrm{SNR})$ for Shannon transmission and $n\log n$ for Ahlswede identification problem.

		A related line of research on RI over Gaussian MIMO channels has been pursued in \cite{LabidiDeppeBoche2021,LabidiDeppeBoche2020,LabidiDeppeBoche2025}. However, to the best of our knowledge, DI over MIMO Gaussian channels has not yet been studied. In contrast, identification capacity has been characterized for SISO Gaussian channels under both white-noise and colored-noise assumptions \cite{Salariseddigh_IT,Salari_ISI,Salari_Colored}. Thus, the characterization of identification capacity for MIMO Gaussian channels under deterministic encoding remains an open problem, including the white-noise setting considered in this paper and extensions to intersymbol interference (ISI) and colored noise. In this paper, we address this gap by studying DI over static MIMO Gaussian channels with white noise and full channel state information (CSI) available at both the transmitter and receiver. We consider deterministic encoders subject to a peak-amplitude constraint and assume that the channel matrix remains constant throughout each transmission block. A key step in our rate analysis is the isometric equivalence between the Frobenius-normed matrix space and the Euclidean space
		$
		\bigl(\mathbb{R}^{r\times n},\|\cdot\|_F\bigr)
		\cong_{\mathrm{iso}}
		\bigl(\mathbb{R}^{rn},\|\cdot\|_2\bigr),
		$
		induced by the vectorization map
		\(\operatorname{vec}:\mathbb{R}^{r\times n}\to\mathbb{R}^{rn}\).
		This observation allows us to transfer the relevant Euclidean-space arguments directly to the matrix setting while preserving distances and norms. In particular, this isometry plays a crucial role in deriving the stated convergence-rate bounds without introducing additional constants arising from norm equivalence. The lower bound follows from a structured DI code that exploits the directional gains of the channel and high-dimensinoal sphere packing within the hypercube, while the upper bound follows from a converse that translates the identfication error requirements into pairwise separation conditions and leverage volume argument in the induced output space. These bounds quantify how the DI capacity scales with the rank and singular values of the channel matrix.

	\subsection{Related Works in Shannon Problem}

	The information-theoretic foundations of Gaussian MIMO communication date back to Shannon's formulation of channel capacity, where reliable communication rates are characterized by maximizing mutual information over admissible input distributions~\cite{S48}. The introduction of multiple antennas in the 1990s added spatial processing as a new degree of freedom for wireless communication. Foschini proposed layered space--time architectures to exploit spatial multiplexing in fading channels~\cite{foschini1996layered}, while Foschini and Gans established the capacity scaling of multiple-element antenna systems under fading~\cite{FoschiniGans1998}. Telatar later provided the seminal characterization of point-to-point Gaussian MIMO capacity for both deterministic and fading channels~\cite{telatar1999capacity}. Consider the deterministic MIMO channel
	\begin{equation}
		\mathbf{Y}=\mathbf{H}\mathbf{X}+\mathbf{Z},
	\end{equation}
	where $\mathbf{H}\in\mathbb{C}^{n_r\times n_t}$ is the channel matrix and
	$\mathbf{Z}\sim\mathcal{CN}(\mathbf{0},\sigma^2\mathbf{I}_{n_r})$ is additive Gaussian noise. Under an input covariance constraint
	$\mathbb{E}\!\left[\mathbf{X}\mathbf{X}^{\mathsf H}\right]\preceq\mathbf{Q}$, the capacity under a total average-power constraint is
	\begin{equation}
		\mathbb C(\mathbf{H},P_{\mathrm a})
		=
		\max_{\substack{\mathbf{Q}\succeq 0\\
		\operatorname{tr}(\mathbf{Q})\leq P_{\mathrm a}}}
		\log \det\!\left(
		\mathbf{I}_{n_r}
		+
		\frac{1}{\sigma^2}
		\mathbf{H}\mathbf{Q}\mathbf{H}^{\mathsf H}
		\right),
		\label{eq:mimo_capacity_average}
	\end{equation}
	where $\mathbf{Q}$ is the input covariance matrix. The capacity-achieving input is circularly symmetric complex Gaussian, with the available power allocated across the eigenmodes of $\mathbf{H}^{\mathsf H}\mathbf{H}$ according to the classical water-filling solution~\cite{telatar1999capacity}. For fading channels, the capacity further depends on the channel model and on the available channel-state information~\cite{FoschiniGans1998,telatar1999capacity,marzetta1999capacity}. The average-power formulation, however, does not capture systems in which the transmitted signal is subject to a strict instantaneous power limit. In addition to the average-power constraint $\mathbb{E}\!\left[\|\mathbf{X}\|^2\right]\leq P_{\mathrm a},$ one may instead impose a peak-power constraint $\|\mathbf{X}\|^2\leq P_{\mathrm p},$ for every channel use, where $P_{\mathrm a}\leq P_{\mathrm p}$. This constraint changes the nature of the capacity-achieving input distribution. In particular, a Gaussian input is no longer feasible under a strict peak-power constraint because its support is unbounded. The capacity problem must therefore be optimized over input distributions satisfying the peak-power constraint, rather than being characterized solely through the input covariance matrix.
	
	The fundamental distinction between average-power and peak-constrained
	Gaussian channels was established in the scalar setting by Smith. For the
	real AWGN channel with a hard amplitude constraint, Smith showed that the
	capacity-achieving input distribution is discrete with a finite number of
	mass points~\cite{Smith71}. Shamai and Bar-David extended this
	result to the complex Gaussian channel with simultaneous average- and
	peak-power constraints, showing that the optimal distribution is discrete in
	amplitude with a finite number of mass points and has a uniformly distributed
	phase~\cite{shamai1995capacity}. Thus, imposing a hard instantaneous
	constraint can fundamentally change the structure of the optimal input
	distribution, even though the underlying channel remains Gaussian.
	
	The corresponding vector and MIMO problem was studied by Rassouli and
	Clerckx under simultaneous average- and peak-power constraints
	~\cite{rassouli2016capacity}. For the isotropic identity channel, they
	established finite radial support of the capacity-achieving distribution,
	corresponding to a finite collection of hyperspheres in the input space. For
	general channel matrices, they derived capacity bounds and structural
	conditions characterizing the optimal input distribution
	~\cite{rassouli2016capacity}.
	
	Hence, Gaussian MIMO channels exhibit two fundamentally different capacity
	regimes. Under an average-power constraint alone, Gaussian signaling is
	optimal and the capacity is characterized by Telatar's covariance-based
	formulation in \eqref{eq:mimo_capacity_average}.
	Under a hard peak-power constraint, Gaussian signaling is generally infeasible,
	and capacity must instead be optimized over bounded-support input
	distributions. The resulting optimal distribution can exhibit finite discrete
	structure in amplitude or radius, depending on the channel geometry and the
	relationship between the average and peak powers
	~\cite{Smith71,shamai1995capacity,rassouli2016capacity}.
	This distinction motivates analytical characterizations and computable bounds
	for peak-constrained Gaussian MIMO channels, particularly when instantaneous
	transmit-power limitations make the conventional covariance-based Gaussian
	solution insufficient.

	\subsection{Notations}

	Throughout, $\mathbbmss{X},\mathbbmss{Y},\mathbbmss{Z},\ldots$ denote alphabet sets, and $\mathbbmss{X}^{c}$ denotes the complement of $\mathbbmss{X}$. Lowercase italic letters $x,y,z,\ldots$ represent scalar constants or realizations of scalar random variables (RVs), whereas uppercase italic letters $X,Y,Z,\ldots$ denote scalar RVs. Lowercase bold symbols $\fx,\fy,\fz,\ldots$ represent vector or matrix realizations, and uppercase bold symbols $\fX,\fY,\fZ,\ldots$ denote random vectors or matrices. We use $[\![M]\!]\triangleq\{1,2,\ldots,M\}$ to denote the set of integers from $1$ to $M$, $\mathbb{R}_{+}$ and $\mathbb{R}$ to denote the sets of non-negative real numbers and real numbers, respectively, and $\triangleq$ to indicate equality by definition. All logarithms are taken to base $2$. For a positive integer $x$, the factorial is defined as
	\begin{equation}
		x!\triangleq\prod_{k=1}^{x}k,
	\end{equation}
	and the Gamma function is given, for positive integers $x$, by
	\begin{equation}
		\Gamma(x)=(x-1)!.
	\end{equation}
	We write $X\sim\mathcal{N}$ when $X$ follows the distribution $\mathcal{N}$, and use the standard Bachmann--Landau notation $o(\cdot)$, $O(\cdot)$, and $\sim$. The expectation and variance of an RV $X$ are denoted by
	\begin{align}
		\mathbb{E}[X] &\quad\text{and}\quad
		\operatorname{Var}[X]
		\triangleq \mathbb{E}\!\left[
		(X-\mathbb{E}[X])^2
		\right],
	\end{align}
	respectively. The quantities $\|\mathbf{x}\|_1$, $\|\mathbf{x}\|$, $\|\mathbf{x}\|_{\max}$, $\|\fX\|_{\max}$, and $\|\fX\|_F$ denote the $\ell_1$-norm, Euclidean norm, maximum-entry norm of a vector, maximum-entry norm of a matrix, and Frobenius norm of a matrix, respectively. For $\fx_0\in\mathbb{R}^n$ and $\fX_0\in\mathbb{R}^{m\times n}$, we define the corresponding Euclidean and Frobenius-norm balls as
	\begin{align}
		\mathcal{S}_{\fx_0}(n,\tau)
		&\triangleq
		\bigl\{
		\fx\in\mathbb{R}^n:
		\|\fx-\fx_0\|\leq\tau
		\bigr\},\\
		\mathcal{S}_{\fx_0}(m,n,\tau)
		&\triangleq
		\bigl\{
		\fx\in\mathbb{R}^{m\times n}:
		\|\fx-\fx_0\|_F\leq\tau
		\bigr\}.
	\end{align}
	The corresponding hypercubes are defined as
	\begin{align}
		\mathbbmss{Q}_{\mathbf{0}}(n,2U)
		&\triangleq
		\bigl\{
		\fx\in\mathbb{R}^n:
		|x_t|\leq U,\ 
		\forall t\in[\![n]\!]
		\bigr\},\\
		\mathbbmss{Q}_{\mathbf{0}}(m,n,2U)
		&\triangleq
		\bigl\{
		\fx\in\mathbb{R}^{m\times n}:
		|x_{tt'}|\leq U,\ 
		\forall t\in[\![m]\!],\
		t'\in[\![n]\!]
		\bigr\}.
	\end{align}
	For continuous probability distributions $P$ and $Q$ on $\mathbb{R}^n$, with respective densities $p$ and $q$, their fidelity is defined as
	\begin{equation}
		F(P,Q)\triangleq
		\int_{\mathbb{R}^n}
		\sqrt{p(\fy)q(\fy)}\,d\fy.
	\end{equation}
	The notation $|\cdot|$ denotes the absolute value of a scalar or the determinant of a matrix, as determined by the context. The Frobenius inner product of $\fX,\fY\in\mathbb{R}^{m\times n}$ is defined as
	\begin{equation}
		\langle\fX,\fY\rangle_F
		\triangleq
		\operatorname{tr}(\fX^{\mathsf T}\fY).
	\end{equation}
	Furthermore, $\operatorname{vec}(\cdot)$ denotes the vectorization operator that stacks the columns of a matrix into a single column vector. Finally, $\mathcal{G}_{\fH}$ denotes the discrete-time multiple-input multiple-output (MIMO) Gaussian channel characterized by the channel matrix $\fH$.

	\section{System Model}
	\label{Sec.SysModel}
	Here, we introduce the adopted system model and establish preliminaries for coding and capacity.

	\subsection{MIMO Gaussian Communication Model}
	
	We consider a discrete-time, real-baseband MIMO Gaussian communication system with $N_T$ transmit antennas and $N_R$ receive antennas where the propagation channel is characterized by a deterministic channel matrix $\fH = [h_{\nu\mu}] \in \mathbb{R}^{N_R\times N_T}$ which is assumed to remain constant throughout a transmission block of $n$ channel uses, and perfectly known at the transmitter and receiver referred to as the full channel state information at both ends. Since $\fH$ is deterministic and fixed over the transmission block, its singular values do not vary with the blocklength $n$; hence, for fixed $N_R$ and $N_T$, $\{\sigma_k(\fH)\}_{k=1}^r$ are finite constants with respect to $n$. In particular, $\sigma_{1}(\fH)<\infty,$	and, whenever $r>0$, it holds $0<\sigma_{r}(\fH)\leq\sigma_{1}(\fH)<\infty.$ Let
	\begin{equation}
		r \triangleq \operatorname{rank}(\fH) \leq \min\{N_R,N_T\}.
	\end{equation}
	The singular values of $\fH$ are ordered as
	\begin{equation}
		\sigma_1(\fH)\geq\sigma_2(\fH)\geq\cdots\geq\sigma_r(\fH)>0,
	\end{equation}
	while, if $\fH$ is rank-deficient, i.e., $r<\min\{N_R,N_T\}$, the remaining singular values satisfy
	\begin{equation}
		\sigma_{r+1}(\fH)=\cdots=\sigma_{\min\{N_R,N_T\}}(\fH)=0.
	\end{equation}
	Thus, the channel has $r$ nonzero spatial modes, with
	$\sigma_1(\fH)$ and $\sigma_r(\fH)$ representing, respectively, the
	strongest and weakest nonzero spatial gains. No assumption that all singular values are equal or that the channel is	full rank is imposed. In general, the nonzero singular values may beunequal, reflecting the different gains of the available spatial  modes, and the channel may contain a nontrivial null space corresponding to its zero singular values.

	Specifically, the $(\nu,\mu)$-th entry $h_{\nu\mu}$ of $\mathbf H$ denotes the
	channel gain from transmit antenna $\mu\in[\![N_T]\!]$ to receive antenna
	$\nu\in[\![N_R]\!]$	where $\nu\in[\![N_R]\!]$ indexes the receive antenna and
	$\mu\in[\![N_T]\!]$ indexes the transmit antenna. Thus,
	\begin{align*}
		\fH_{N_R \times N_T} =
		\begin{bmatrix}
			h_{11} & h_{12} & \cdots & h_{1N_T}\\
			h_{21} & h_{22} & \cdots & h_{2N_T}\\
			\vdots & \vdots & \ddots & \vdots\\
			h_{N_R1} & h_{N_R2} & \cdots & h_{N_RN_T}
		\end{bmatrix}
		\in \mathbb{R}^{N_R\times N_T} .
	\end{align*}
	The block-constant assumption means that the channel coefficients $H_{\nu\mu}$
	do not vary with the channel-use index $t\in[\![n]\!]$ within the transmission
	block. Equivalently, if $\mathbf H_t$ denotes the channel matrix at channel use
	$t$, then $\mathbf H_t=\mathbf H, \forall t\in[\![n]\!].$
	Hence, the same channel matrix governs the propagation of every transmitted vector
	throughout the block. This model corresponds to a channel whose coherence time is
	at least as long as the transmission interval under consideration; any channel
	variation, if present, is assumed to occur only between distinct transmission blocks
	and is therefore not resolved within a single block.

	\subsection{Channel Inputs}
	
	At channel use $t\in[\![n]\!]$, let $\fX_t \triangleq \big((X_{t,\mu})_{\mu=1}^{N_T}\big)^T\in\mathbb{R}^{N_T}$ denote the transmit vector, where $\mu\in[\![N_T]\!]$ indexes the transmit antenna and
	$X_{t,\mu}$ is the symbol transmitted from antenna $\mu$. Thus, $N_T$ symbols are transmitted simultaneously at each channel use. For a fixed antenna $\mu$, the
	corresponding length-$n$ transmit sequence is $\fX_\mu \triangleq (X_{t,\mu})_{t=1}^n\in\mathbb{R}^n.$ Hence, the complete channel input over $n$ channel uses is the random matrix $\fX=[X_{t,\mu}]\in\mathbb{R}^{N_T\times n},$ represented either column-wise by $\{\fX_t\}_{t=1}^n$ or row-wise by $\{\fX_\mu\}_{\mu=1}^{N_T}$. We impose a per-symbol peak-amplitude constraint on the channel input sequence. Specifically, every admissible realization $\fx$ is required to satisfy
	\begin{equation}
		 \| \fx \|_{\max} \le \sqrt{P_{\max}},
	\end{equation}
	 equivalently, $|x_{t,\mu}| \le \sqrt{P_{\max}}, \forall\,t\in[\![n]\!],\forall\,\mu\in[\![N_T]\!].$ Here, $P_{\max}>0$ denotes the maximum allowable transmit amplitude per antenna and channel use.

	\subsection{Channel and Noise Model}

	Let $\fX_t \in \mathbb{R}^{N_T}$, $\fY_t \in \mathbb{R}^{N_R}$, and $\fZ_t \in \mathbb{R}^{N_R}$ denote the random transmit vector, received vector, and additive noise vector, respectively, at channel use $t.$
	Specifically, during each channel use, the transmitter emits a vector of symbols across $N_T$ transmit antennas, in contrast to conventional SISO schemes where only a scalar symbol is transmitted. The input-output relation of the channel $\mathcal{G}_{\fH}$ at time $t \in [\![n]\!]$ is given by 
	\begin{equation}
		\fY_t = \fH \fX_t + \fZ_t .
	\end{equation}
	 Hence, at every channel use, the transmitter simultaneously sends $N_T$ symbols over the $N_T$ transmit antennas, and the receiver observes an $N_R$-dimensional vector corrupted by additive noise and shaped by the channel matrix $\fH.$ The standard transmission scheme employs input strings of length $n$. Hence, the block-wies channel law admits the following compact form:
	\begin{align}
		\label{Eq.Block_Wise_Channel_Law}
		\mathbf{Y} = \fH \fX + \mathbf{Z},
	\end{align}
	where $\fX \in \mathbb{R}^{N_T \times n},\ \fY \in \mathbb{R}^{N_R \times n},\ \fZ \in \mathbb{R}^{N_R \times n}$ are input, output and noise matrices. We assume that the noise features spatial (across antennas) and temporal (across channel uses) independence. Let \(Z_{t,\mu}\) denote the noise sample at receive antenna $\mu \in [\![N_R]\!]$ and time index $t \in [\![n]\!].$ The independent and identically distributed (i.i.d.) Gaussian assumption means 
	\begin{equation}
		Z_{t,\nu} \overset{\text{i.i.d.}}{\sim} \mathcal{N}(0,\sigma^2),
	\end{equation}
	 which is equivalently characterized by
	 \begin{equation}
	 	\mathbb{E}[Z_{t,\nu}\,Z_{t',\nu'}] = \sigma^2\,\delta_{\nu\nu'}\,\delta_{tt'},
	 \end{equation}
	  for $\nu,\nu' \in [\![N_R]\!]$ and $t,t' \in [\![n]\!]$ where \(\delta_{\nu \nu'}\) and \(\delta_{tt'}\) are Kronecker deltas. This captures independence across both receive antennas and time. Therefore, $\fZ_t \overset{\text{\tiny i.i.d.}}{\sim} \mathcal{N}(\mathbf{0}, \sigma^2 \mathbf{I}_{N_R})$ is the receiver noise vector at a single time instant $t$ with the following distribution
	\begin{align}
		\label{Eq.Noise_Density_Acr_N_R}
		f_{\fZ_t}(\fz_t) = (2\pi\sigma^2)^{-N_R n/2} \exp(- \| \fz_t \|^2 / 2\sigma^2 ) ,
	\end{align}
	with $\fz_t \in \mathbb R^{N_R}$. Each column $\mathbf z_t$ of the noise matrix $\fZ$ is i.i.d. according to
	\eqref{Eq.Noise_Density_Acr_N_R}. Since the noise is independent across	time and receive antennas, the joint block-wise density factorizes as
	\begin{equation}
		f_{\fZ}(\fz) = (2\pi\sigma^2)^{-N_R n/2} \exp( -\|\fz\|_F^2/2\sigma^2 ).
	\end{equation}
	Equivalently, $\fZ \sim \mathcal{N}(\mathbf 0,\sigma^2\mathbf I_{N_R n}).$ Conditioned on the transmitted vector $\mathbf{x}_t$, the received vector satisfies
	\begin{equation}
		\fY_t \mid \fX_t \sim \mathcal{N}(\fH \fX_t,\sigma^2\fI_{N_R}).
	\end{equation}
	 Assuming the block channel law given in \eqref{Eq.Block_Wise_Channel_Law}, the conditional density of the received matrix $\fY$ given $\fX$ and $\fH$ factorizes as
	\begin{align}
		\label{Eq.Noise_Den_Matr}
		f_{\fY|\fX}(\fy|\fx)
		= (2\pi\sigma^2)^{-N_R n/2} \cdot \exp\big(- \big\| \fy - \fH\fx \big\|_{F}^2 / 2\sigma^2 \big) .
	\end{align}

	\subsection{Rank-Reduced Channel Representation}
	
	Invoking the reduced SVD given in Definition \ref{Def.SVD} and setting $p=N_R$ and $q=N_T$, we observe that the rank constraint implies that only the $r$-dimensional right-singular subspace of $\fH$ contributes to the received signal. Accordingly, define the rank-space representation of the channel input as
	\begin{equation}
		\label{Eq.bar_X}
		\bar{\fX}_t \triangleq \fV_r \fX_t \in\mathbb{R}^{N_T \times n}.
	\end{equation}
	 Thereby, using the SVD of $\fH$ and $\fV_r \fV_r^{\mathsf T} = \fI$ gives
	\begin{align}
		\fY_t = (\fU_r\fSigma_r\fV_r^{\mathsf T} ) \bar{\fX} + \fZ_t = \fU_r\fSigma_r \fX_t + \fZ_t .
	\end{align}
	Next, we proceed to project the received signal and noise vectors onto the $r$-dimensional column space of $\fH.$ Specifically, define the projected signal and noise as $\bar{\fY}_t \triangleq \fU_r^{\mathsf T}\fY_t,$ and $\bar{\fZ}_t \triangleq \fU_r^{\mathsf T}\fZ_t.$ Thus, the channel $\forall t \in [\![n]\!]$ reduces to
	\begin{align}
		\bar{\fY}_t = \fU_r^{\mathsf T} \bar{\fY}_t = \fU_r^{\mathsf T} \fU_r\fSigma_r \fX_t + \fU_r^{\mathsf T} \fZ_t = \fSigma_r \fX_t + \bar{\fZ}_t .
	\end{align}
	Since $\bar{\fZ}_t=\fU_r^{\mathsf T}\fZ_t,\ \fU_r^{\mathsf T}\fU_r=\fI_r$, we have $\bar{\fZ}_t\sim\mathcal{N}(\mathbf{0},\sigma^2\fI_r).$ Thus, a rank-$r$ MIMO channel is equivalently represented by the $r$-dimensional channel matrix $\fSigma_r$. The remaining dimensions lie in the channel null spaces and do not affect the received signal; only the codeword projection onto the $r$-dimensional right-singular subspace matters.

	\section{Coding Scheme and Preliminaries}
	
	In the following, we draw on the rigorous performance parameters for identification problem including the formulation of the code definition and capacity for the MIMO Gaussian channel $\bG.$

		\begin{definition}[Singular Value Decomposition]
		\label{Def.SVD}
		Let $\fH\in\mathbb{R}^{p\times q}$ be a matrix of rank $r\triangleq\operatorname{rank}(\fH)\leq\min\{p,q\}.$
		The singular value decomposition (SVD) of $\fH$ is given by $\fH=\fU\fSigma\fV^{\mathsf T},$
		where $\fU\in\mathbb{R}^{p\times p}, \fV\in\mathbb{R}^{q\times q}, \fU^{\mathsf T}\fU=\fI_p, \fV^{\mathsf T}\fV=\fI_q,$ and $\fSigma\in\mathbb{R}^{p\times q}$ contains the singular values $\sigma_1\geq\cdots\geq\sigma_r>0,$ with all remaining singular values equal to zero. Equivalently, retaining only the $r$ nonzero singular values yields the \emph{reduced SVD} 
		\begin{equation}
			\fH=\fU_r\fSigma_r\fV_r^{\mathsf T},
		\end{equation}
		 where $\fU_r\in\mathbb{R}^{p\times r}, \fV_r\in\mathbb{R}^{q\times r}, \fU_r^{\mathsf T}\fU_r = \fV_r^{\mathsf T}\fV_r = \fI_r,$ and $\fSigma_r = \operatorname{diag}(\sigma_1,\ldots,\sigma_r) \in\mathbb{R}^{r\times r}.$
		\qed
	\end{definition}

	\begin{definition}[Identification Code]
		\label{Def.MIMO_ID_Code}
		
		An $(n,M_n(R),e_1,e_2)$-code for the MIMO channel $\bG$ under the
		peak-amplitude constraint $\sqrt{P_{\max}}$ is defined by the pair
		$(\mathbbmss{C},\mathcal{D})$, where $n$ denotes the blocklength,
		$R$ the coding rate, and $M_n(R)\in\mathbb{N}$ the number of messages.
		Specifically,
		\[
		\mathbbmss{C}
		=
		\{\fc_i\}_{i=1}^{M_n(R)},\quad
		\mathcal{D}
		=
		\{\mathbbmss{D}_i\}_{i=1}^{M_n(R)},
		\]
		where $\mathbbmss{C}$ is the codebook and $\mathcal{D}$ is the collection of decoding regions. Each codeword matrix $\mathbf{c}_i \in \mathbb{R}^{r\times n}$ is given by $\mathbf{c}_i=[c_{i,t,k}]
		\in\mathbb{R}^{r\times n},$
		where $i\in[\![M_n(R)]\!]$ denotes the message index,
		$t\in[\![n]\!]$ denotes the channel-use index, and
		$k\in[\![r]\!]$ denotes the spatial-stream index.
		Specifically, $c_{i,t,\nu}$ is the $(\nu,t)$-th entry of
		$\mathbf{c}_i$, that is, $[\mathbf{c}_i]_{\nu,t}=c_{i,t,\nu}.$ Let $\mathbf{V}_r\in\mathbb{R}^{N_T\times r}$ denote the corresponding
		precoding matrix. The matrix transmitted over the $N_T$ antennas is
		then
		\begin{align}
			\label{Eq.Precoder}
			\bar{\fc}_i
			\triangleq
			\fV_r \fc_i
			\in\mathbb{R}^{N_T\times n}.
		\end{align}
		Since the peak-amplitude constraint concerns the signal physically
		transmitted over the antennas, it is imposed on the entries of
		$\bar{\mathbf{c}}_i$. Thus, $\forall i\in[\![M_n(R)]\!]$,
		$t\in[\![n]\!]$, and
		$\mu\in[\![N_T]\!]$,
		\begin{equation}
			|\bar{c}_{i,t,\mu}|
		\leq
		\sqrt{P_{\max}},
		\end{equation}
		where $\bar{c}_{i,t,\mu}$ denotes the $(\mu,t)$-th entry of
		$\bar{\mathbf{c}}_i$, with $\mu$ indexing the physical transmit
		antennas. For a codeword $\fc_i$, the corresponding channel input is given by $\bar{\fc}_i=\fV_r \fc_i$. Hence, the conditional distribution of the channel output is determined by $\bar{\fc}_i$. Accordingly, two types of decoding errors may occur: a type-I error and a type-II error. The corresponding error probabilities read
		\begin{align}
					P_{e,1}(i) & \triangleq
					\Pr\big(\fY\in\mathbbmss{D}_i^c\,\big|\,\fX=\bar{\fc}_i\big)
					=
					1-\int_{\mathbbmss{D}_i}
					f_{\fZ}(\fy-\fH\bar{\fc}_i)\,d\fy .
					\\
					P_{e,2}(i,j) & \triangleq
					\Pr\big(\fY\in\mathbbmss{D}_j\,\big|\,\fX=\bar{\fc}_i\big)
					=
					\int_{\mathbbmss{D}_j}
					f_{\fZ}(\fy-\fH\bar{\fc}_i)\,d\fy . \
				\end{align}
		The code satisfies
		\begin{align}
			P_{e,1}(i) \leq e_1, \qquad 
			P_{e,2}(i,j) \leq e_2, \qquad 	\forall\, i,j\in[\![M_n(R)]\!], i \neq j ,
		\end{align}
		 for prescribed error tolerances $e_1,e_2>0$.
		\qed
	\end{definition}

	\begin{definition}[Identification Capacity] A rate $R>0$ is achievable if, for arbitrary $e_1,e_2>0$ and all sufficiently large blocklengths $n$, there is an $(n,M_n(R),e_1,e_2)$-code satisfying Definition~\ref{Def.MIMO_ID_Code}. The identification capacity of $\bG$ is supremum of all achievable rates and is denoted by $\mathbb{C}_{\text{I}}(\bG).$
	\qed
	\end{definition}

	\subsection{Operational Interpretation of Identification Rates}

	 For DI problem, the logarithm of the codebook
	 size can grow on the order of $n\log n$. Consequently, the
	 conventional Shannon coding rate, normalized by $n$, does not
	 adequately characterize the dominant asymptotic growth of the
	 codebook. This motivates a two-scale characterization of
	 identification performance. The leading $n\log n$ term determines the dominant growth of
	 the number of identification messages, while the subsequent
	 linear-in-$n$ term captures finer, system-dependent contributions.
	 Operationally, these two scales correspond to multiplicative
	 factors of the form $n^{nL_1}$ and
	 $e^{nL_2}$, respectively,
	 in an asymptotic expansion of the form
	 \begin{equation}
	 	M_n
	 =
	 n^{nL_1(r)}
	 \exp\!\left\{
	 nL_2(\sigma_r(\fH),r,P_{\max})+o(n)
	 \right\}.
	 \end{equation}
	 
	 Thus, $L_1$ characterizes the dominant codebook-growth scale,
	 whereas $L_2(\sigma_r(\fH),r,P_{\max})$ captures the
	 system-dependent contribution at the next asymptotic order.
	 The latter can distinguish coding schemes with the same
	 leading-order growth but different system-dependent performance,
	 provided the corresponding asymptotic expansion is established. The achievability and converse bounds establish
	 \begin{equation}
	 	\log M_n(R)\asymp n\log n,
	 	\qquad
	 	M_n(R)=\exp\bigl(\Theta(n\log n)\bigr)
	 	=n^{\Theta(n)},
	 \end{equation}

	 Throughout the analysis, we use $R$ to denote both the
	 finite-blocklength coding rate and the corresponding
	 asymptotically achievable rate, writing $R_n$ when its
	 dependence on $n$ needs to be emphasized. This convention
	 simplifies the transition between finite-blocklength bounds,
	 asymptotic achievability, and capacity statements. For a
	 detailed discussion of this distinction and the associated
	 asymptotic arguments, we refer the reader to
	 \cite{Salari26_Inverse}.

	 	\begin{definition}[Coding Rates]
	 	\label{Def.Coding_Rate}
	 	For an $(n,M_n,e_1,e_2)$-code, the classical Shannon coding
	 	rate reads
	 	\begin{align}
	 		R_n^{\mathrm S}
	 		\triangleq \frac{\log M_n}{n}.
	 		\label{Eq.S_Coding_Rate}
	 	\end{align}
	 	For DI, we define the first-order
	 	identification rate as
	 	\begin{align}
	 		R_1
	 		\triangleq
	 		\liminf_{n\to\infty}
	 		\frac{\log M_n}{n\log n}.
	 		\label{Eq.First_Order_Rate}
	 	\end{align}
	 	This rate characterizes the dominant asymptotic growth of
	 	the logarithm of the codebook size.
	 	
	 	Let $L_1 \triangleq (1-b)r/4$ denote the limiting first-order
	 	coefficient established by the first-order analysis. The
	 	second-order identification rate, relative to this coefficient,
	 	is defined as
	 	\begin{align}
	 		R_2
	 		\triangleq
	 		\liminf_{n\to\infty}
	 		\frac{\log M_n-L_1 n\log n}{n}.
	 		\label{Eq.Second_Order_Rate}
	 	\end{align}
	 \end{definition}

	\subsection{Encoding}
	
	We consider a deterministic encoder defined as
	\begin{equation}
		f:[\![M_n(R)]\!]\to\mathbb{R}^{r \times n},
		\qquad
		i\mapsto\fc_i,
	\end{equation}
	where each message $i\in[\![M_n(R)]\!]$ is uniquely mapped to a matrix-valued codeword
	$\fc_i\in\mathbb{R}^{r \times n}$. Upon observing message $i$, the transmitter sends the
	corresponding codeword $\fc_i$. If the transmitted message is modeled as a random variable
	$I$, then the resulting transmitted codeword $\fC=f(I)$ is a random matrix induced by the deterministic encoder. The full transmitted codeword $\fc_i\in\mathbb{R}^{r \times n}$ associated with message $i \in [\![M_n(R)]\!]$ can be represented either by its channel-use vectors $\{\fc_{i,t}\}_{t=1}^{n}$ or by its antenna-wise sequences $\{\fc_{i,\mu}\}_{\mu=1}^{r}$:
	
	\begin{align}
		\fc_i = [c_{i,t,\mu}]
		=
		\begin{bmatrix}
			c_{i,1,1} & c_{i,2,1} & \cdots & c_{i,n,1} \\
			c_{i,1,2} & c_{i,2,2} & \cdots & c_{i,n,2} \\
			\vdots & \vdots & \ddots & \vdots \\
			c_{i,1,r} & c_{i,2,r} & \cdots & c_{i,n,r}
		\end{bmatrix}
		\in \mathbb{R}^{r \times n}.
	\end{align}
	
	\subsection{Decoding}
	To determine if message $j \in [\![M_n(R)]\!]$ was sent, the decoder checks if $\fy$ lies in the decoding set $$\mathbbmss{D}_j = \{ \fy \allowbreak \in \allowbreak \mathbb{R}^{N_T \times n} \,:\; \allowbreak d_F(\fy,\bar{\fc}_j) \leq \delta_n \},$$	where the \emph{decoding measure} between the output $\fy$ and the target codeword $\bar{\fc}_j$ is given by 
	\begin{align}
		d_F(\mathbf{y},\bar{\fc}_j) \allowbreak \triangleq n^{-1} \big\| \mathbf{y}-\mathbf{H} \bar{\fc}_j \big\|_{F}^{2} - N_R \sigma^2,
	\label{Eq.Dec_Measure}
	\end{align}
	and $\delta_n = 4a/3n^{(1 - b)/2}$ is the decoding threshold where $a$ and $b$ are fixed and arbitrary positive constants, respectively.

	\section{Main Results}
	\label{Sec.Res}
	Here, we present our main capacity theorem with the achievability and the converse proofs.
	
	\subsection{Main Results}
	\label{Subsec.MaiN_Results}
	
	\begin{theorem}
		\label{Th.MIMO-Capacity}
		Let \(\bG\) be a MIMO Gaussian channel with $r=\operatorname{rank}(\mathbf H).$ Under a peak-amplitude constraint and a super-exponential codebook-size regime with $M_n(R)=2^{(n\log n)R},$ the identification capacity of \(\bG\) with respect to the adapted first-order rate in \eqref{Eq.First_Order_Rate} is given by
		\begin{align}
			\label{Ineq.LU}
			\frac{r}{4} \leq \mathbb{C}_{\rm I}(\bG) \leq \frac r2 .
		\end{align}
	\end{theorem}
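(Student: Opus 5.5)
The proof splits into an achievability part (lower bound $r/4$) and a converse (upper bound $r/2$). Both rely on the rank-reduced representation $\bar{\fY}_t=\fSigma_r\fc_{i,t}+\bar{\fZ}_t$ and on the isometry $\operatorname{vec}:(\mathbb{R}^{r\times n},\|\cdot\|_F)\to(\mathbb{R}^{rn},\|\cdot\|_2)$. Under this isometry, codewords become points in a cube of dimension $rn$, and the known SISO Gaussian DI arguments can be run in dimension $rn$ instead of $n$.

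\textbf{Achievability.} First I will find an admissible input cube. Let $A\triangleq\sqrt{P_{\max}/r}$, so that $\|\fc_i\|_{\max}\le A$ implies $|\bar c_{i,t,\mu}|=|\sum_k [\fV_r]_{\mu k}c_{i,t,k}|\le \sqrt{r}\,\|\fc_{i,t}\|\cdot\max_\mu\|[\fV_r]_{\mu,\cdot}\|\le\sqrt{P_{\max}}$. Hence every $\fc_i\in\mathbbmss{Q}_{\mathbf 0}(r,n,2A)$ is admissible.

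Next I will pack this cube, viewed as a cube in $\mathbb{R}^{rn}$, with Frobenius balls of radius $\sqrt{n\epsilon_n}$, where $\epsilon_n=a/n^{(1-b)/2}$. I will use a saturated (Minkowski--Hlawka type) packing with density at least $2^{-rn}$. This gives
\[
M\ge 2^{-rn}\frac{(2A)^{rn}}{\operatorname{Vol}(\mathcal S(rn,\sqrt{n\epsilon_n}))}.
\]
Using $\operatorname{Vol}=\pi^{rn/2}(n\epsilon_n)^{rn/2}/\Gamma(rn/2+1)$ and Stirling's formula, $\log M\ge \tfrac{rn}{2}\log(rn/\epsilon_n\cdot n^{-1})+O(n)$. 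Since $1/\epsilon_n\propto n^{(1-b)/2}$, this yields $\log M\ge \tfrac{rn}{2}\cdot\tfrac{1-b}{2}\log n+O(n)$, i.e. rate $(1-b)r/4$. Letting $b\to0$ gives $r/4$.

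The errors are controlled as follows, using the distance decoder $d_F$ with threshold $\delta_n$.
\begin{itemize}
\item \emph{Type I.} Chebyshev's inequality on $n^{-1}\|\fZ\|_F^2-N_R\sigma^2$, whose variance is $2N_R\sigma^4/n$, bounds the error by $O(1/(n\delta_n^2))=O(n^{-b})\to0$.
\item \emph{Type II.} Expand $\|\fH(\bar\fc_i-\bar\fc_j)+\fZ\|_F^2$. The output separation is $\|\fH(\bar\fc_i-\bar\fc_j)\|_F^2=\|\fSigma_r(\fc_i-\fc_j)\|_F^2\ge\sigma_r^2 n\epsilon_n$. To make the signal term exceed the threshold, the packing radius should be scaled by $1/\sigma_r$, which only affects $O(n)$ terms. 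The cross term $2\langle\fH(\bar\fc_i-\bar\fc_j),\fZ\rangle_F$ is handled by a Chebyshev bound conditioned on the separation. Its variance is proportional to the separation, which gives decay $O(n^{-b})$, as in the SISO proof.
\end{itemize}

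\textbf{Converse.} Take any sequence of codes with $e_1+e_2<1$. The first step is to show that the outputs must be separated: $\|\fSigma_r(\fc_i-\fc_j)\|_F\ge \theta_n$ for all $i\neq j$, with $\theta_n=\kappa\, n^{-\gamma}$ for some fixed $\gamma>0$ (any polynomially decaying threshold suffices). I will argue by contradiction. If two mean matrices were closer than $\theta_n$, the total variation between the two Gaussian output laws would be at most $\|\fSigma_r(\fc_i-\fc_j)\|_F/(2\sigma)\to0$. Then $P_{e,1}(i)+P_{e,2}(j,i)\ge 1-\mathrm{TV}$, which contradicts $e_1+e_2<1$. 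This step is routine here because the outputs are Gaussian, unlike the Poisson case.

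Then I will use a volume argument in the output space $\fSigma_r\mathbbmss{Q}_{\mathbf 0}(r,n,2A')$, where $A'=\sqrt{N_T P_{\max}}$ is the induced amplitude bound for $\fc=\fV_r^{\mathsf T}\bar\fc$. Balls of radius $\theta_n/2$ around the $\fSigma_r\fc_i$ are disjoint and lie in the $\theta_n/2$-enlarged image cube. This gives
\[
M\le \frac{\prod_k\sigma_k^{n}(2A'+\theta_n)^{rn}}{\operatorname{Vol}(\mathcal S(rn,\theta_n/2))}.
\]
Stirling's formula then gives $\log M\le \tfrac{rn}{2}\log(rn)+rn\gamma\log n+O(n)$, so the rate is at most $r/2+r\gamma$. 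Letting $\gamma\downarrow0$ gives $r/2$.

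\textbf{Main obstacle.} The delicate step is the type-II analysis in the achievability part. The threshold $\delta_n$ and the packing radius must be tuned jointly so that $\sigma_r^2\epsilon_n-\delta_n$ stays of order $\epsilon_n$, while the cross-term variance still decays. I would first copy the SISO parameter choice with $\epsilon_n$ rescaled by $\sigma_r^{-2}$ and check that the exponent bookkeeping remains uniform in $r$. For the converse, a secondary point is making the $\theta_n$ lower bound independent of $n$, or only polynomially small, so that the $\log(1/\theta_n)$ term is $o(n\log n)$ relative to the leading term.
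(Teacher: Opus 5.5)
Your proposal is correct and follows essentially the paper's route. For achievability, both use a rank-space codebook in the cube of half-edge $\sqrt{P_{\max}/r}$, a saturated packing whose doubled balls cover the cube (costing the factor $2^{-rn}$), radius $\sqrt{n\epsilon_n}$ with $\epsilon_n$ rescaled by $\sigma_r^{-2}$, and the $d_F$ threshold decoder analysed via Chebyshev. For the converse, both use pairwise output separation followed by a volume count.

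The notable variations are that you obtain separation from the Pinsker-type bound $\mathrm{TV}\le\|\fSigma_r(\fc_i-\fc_j)\|_F/(2\sigma)$ rather than Fuchs--van de Graaf with the Gaussian fidelity, and you count volume in the output box rather than the input cube. Note that your bound already yields the $n$-independent separation $\|\fSigma_r(\fc_i-\fc_j)\|_F\ge 2\sigma(1-e_1-e_2)$, so the decaying $\theta_n$ and the limit $\gamma\downarrow 0$ are unnecessary. Also, the enlarged box volume should read $\prod_k(2\sigma_k A'+\theta_n)^n$, which changes only $O(n)$ terms.
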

	\begin{proof}
		Proofs for converse and achievability are provided in Subsections~\ref{Subsec.Converse2} and \ref{Subsec.Achievability}, respectively.
	\end{proof}

	Next, in order to present our second theorem, we define the following system parameters:

 	\begin{theorem}
 		\label{Th.MIMO-Capacity2}
 		Let $\bG$ be a MIMO Gaussian channel satisfying the conditions of Theorem \ref{Th.MIMO-Capacity}. Then, for the considered codebook construction and the second-order rate defined in \eqref{Eq.Second_Order_Rate}, $R_2\geq\Lambda_L,$ where
 		\begin{align}
 			\Lambda_L\triangleq
 			r\log\left(
 			\frac{\sigma_r(\mathbf H)\sqrt{P_{\max}}}{\sqrt{2\pi e\,a}}
 			\right).
 		\end{align}
 		Thus, $\Lambda_L$ is achievable order-$n$ contribution relative to the leading coefficient $L_1=(1-b)r/4$ in \eqref{Eq.Second_Order_Rate}.
 	\end{theorem}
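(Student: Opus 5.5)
We obtain the second-order bound by tracking the constants in the sphere-packing codebook that gives the achievability part of Theorem~\ref{Th.MIMO-Capacity}, instead of only the $n\log n$ exponent.

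\textbf{Codebook construction.} We work in the rank space $\mathbb{R}^{r\times n}$ and use the vectorization isometry $\bigl(\mathbb{R}^{r\times n},\|\cdot\|_F\bigr)\cong_{\mathrm{iso}}\bigl(\mathbb{R}^{rn},\|\cdot\|_2\bigr)$. After the reduction $\bar{\fY}_t=\fSigma_r\fc_{i,t}+\bar{\fZ}_t$, the receiver effectively sees the points $\fSigma_r\fc_i$. Since $\fV_r$ has orthonormal columns, $\|\fV_r\fc\|_{\max}\le\|\fc_t\|_2$ column-wise. We therefore take codewords $\fc_i$ in a hypercube $\mathbbmss{Q}_{\mathbf 0}(r,n,2A)$ whose side $A$ is a constant multiple of $\sqrt{P_{\max}}$; the constant is chosen, using $r$, so that $|\bar c_{i,t,\mu}|\le\sqrt{P_{\max}}$.

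We then pick a saturated packing of balls of radius $\rho_n=\sqrt{n\epsilon_n}$, with $\epsilon_n=a/n^{(1-b)/2}$, whose centers lie in this cube, and we require the centers to be separated after mapping by $\fSigma_r$. The easiest way to arrange this is to pack directly in the output-rank space $\fSigma_r\mathbbmss{Q}$. That set is a box with side lengths $2\sigma_kA$, and its volume is at least $(2\sigma_r A)^{rn}$. The standard saturated-packing (Minkowski--Hlawka type) argument gives packing density at least $2^{-rn}$. Hence
\[
M_n\ \ge\ 2^{-rn}\,\frac{(2\sigma_r A)^{rn}}{\mathrm{Vol}\,\mathcal{S}(rn,\rho_n)},\qquad \mathrm{Vol}\,\mathcal{S}(rn,\rho)=\frac{\pi^{rn/2}\rho^{rn}}{\Gamma(rn/2+1)}.
\]

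\textbf{Asymptotic expansion.} Taking logarithms and applying Stirling's formula, $\log\Gamma(rn/2+1)=\tfrac{rn}{2}\log\tfrac{rn}{2e}+o(n)$. With $\rho_n^{rn}=(a\,n^{(1+b)/2})^{rn/2}$ this gives
\[
\log M_n \ge rn\log(\sigma_rA)-\tfrac{rn}{2}\log\pi-\tfrac{rn}{4}(1+b)\log n-\tfrac{rn}{2}\log a+\tfrac{rn}{2}\log n+\tfrac{rn}{2}\log\tfrac{r}{2e}+o(n).
\]
The $\log n$ terms combine to $\tfrac{(1-b)r}{4}\,n\log n=L_1 n\log n$. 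Subtracting this and dividing by $n$, the $O(n)$ remainder collects into $r\log\bigl(\sigma_r A/\sqrt{2\pi e\,a}\bigr)$, up to the $r$-dependent factor coming from $\tfrac{r}{2}\log r$ and from the normalization of $A$. I expect these constants to combine into exactly $A=\sqrt{P_{\max}}$ in the stated $\Lambda_L$. Taking $\liminf$ then yields $R_2\ge\Lambda_L$.

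\textbf{Error analysis.} We still have to verify that this codebook meets $e_1,e_2$ with the threshold $\delta_n$; this repeats the achievability proof. The type-I error follows from Chebyshev's inequality applied to $n^{-1}\|\fZ\|_F^2-N_R\sigma^2$, which has variance $O(1/n)$, and $\delta_n\sqrt n\to\infty$. The type-II error uses the separation $\|\fSigma_r(\fc_i-\fc_j)\|_F\ge2\rho_n$, i.e.\ $n^{-1}\|\fH(\bar\fc_i-\bar\fc_j)\|_F^2\ge4\epsilon_n>\delta_n$. We split on whether the cross term $\langle \fH(\bar\fc_i-\bar\fc_j),\fZ\rangle_F$ is large and bound each case by Chebyshev.

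\textbf{Main obstacle.} The hardest step is the bookkeeping of the $O(n)$ constants. Reconciling the per-antenna peak constraint after precoding by $\fV_r$ with a hypercube of side exactly $2\sqrt{P_{\max}}$, and absorbing the $\tfrac r2\log\tfrac r2$ Stirling term and the packing-density loss, must leave precisely $r\log\bigl(\sigma_r\sqrt{P_{\max}}/\sqrt{2\pi e a}\bigr)$. If this does not match, I would first try packing in $\mathbb{R}^{rn}$ with the density-$2^{-rn}$ bound applied to the cube intersected with the ball geometry. I would then check whether the claimed $\Lambda_L$ implicitly uses the cube side $2\sqrt{P_{\max}}$ in rank coordinates.
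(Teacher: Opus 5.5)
Your proposal is correct and is essentially the paper's argument: saturated packing gives a $2\rho$-covering, then a volume comparison and Stirling, then you subtract $L_1 n\log n$. The one variation is where you pack.

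\textbf{Where you pack.} The paper packs in the rank-space cube $\mathbbmss{Q}_{\mathbf 0}(r,n,2A_{\mathrm r})$ with radius $\tau_n=\sqrt a\,n^{(1+b)/4}/\sigma_r(\fH)$; the factor $\sigma_r^{-2}$ is built into $\epsilon_n$. It then transfers the separation to the output through Lemma~\ref{Lem.Min_dis_con_cb}. You pack directly in the output box $\fSigma_r\mathbbmss{Q}_{\mathbf 0}(r,n,2A)$ with radius $\rho_n=\sqrt a\,n^{(1+b)/4}$, so output separation $4\epsilon_n=3\delta_n$ is automatic. That is exactly the margin the type-II split uses.

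After your step $\prod_k\sigma_k^n\ge\sigma_r^{rn}$, both routes give the same estimate $M_n\ge(2\sigma_rA)^{rn}/\mathrm{Vol}\,\mathcal S(rn,2\rho_n)$. If you skip that step, your route gives a stronger remainder, with $\sigma_r(\fH)$ replaced by $(\det\fSigma_r)^{1/r}$.

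\textbf{The density bound.} The $2^{-rn}$ bound you invoke is not Minkowski--Hlawka. It is the elementary fact that a maximal $2\rho_n$-separated set in the box is a $2\rho_n$-net. That is precisely what the paper proves, and it is the right form when balls protrude from the box.

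\textbf{The bookkeeping closes exactly.} You left this step open, but no fallback is needed. Your own bound $\|\fV_r\fc\|_{\max}\le\max_t\|\fc_t\|_2\le\sqrt r\,A$ forces $A=\sqrt{P_{\max}/r}$. With that choice, the order-$n$ coefficient of your display is
\[
r\log(\sigma_rA)+\frac r2\log\frac{r}{2\pi e\,a}=r\log\frac{\sigma_r(\fH)\sqrt{P_{\max}}}{\sqrt{2\pi e\,a}}=\Lambda_L .
\]
Two cancellations happen here:
\begin{itemize}
\item the $-\frac r2\log r$ coming from $A$ cancels the $+\frac r2\log r$ coming from Stirling;
\item the $2^{-rn}$ density loss has already cancelled the factor $2^{rn}$ from the side length $2A$.
\end{itemize}
This is the same cancellation hidden in \eqref{Eq.Log_M_Achievability}, where the $r^{(1-b)/4}$ inside the logarithm cancels the $\log r$ part of $\frac{1-b}{4}rn\log(rn)$.

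\textbf{Do not use your fallback.} Switching to a rank-space cube of side $2\sqrt{P_{\max}}$ would not be admissible. That cube leaves the feasible set whenever some row of $\fV_r$ has $\ell_1$-norm above one, which can happen for $r\ge 2$ (e.g.\ a Hadamard-type $\fV_r$). The resulting codebook would violate the per-antenna peak constraint.
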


	\begin{proof}
		Proof follows by adapting the second order rate provided in \eqref{Eq.Second_Order_Rate} unto \eqref{Eq.Log_M_Achievability}.
	\end{proof}
	
	The achievability and converse bounds derived in Theorem \ref{Th.MIMO-Capacity} have different coefficients for the leading \(n\log n\) term. Specifically, the achievability bound has coefficient $L_1=(1-b)r/4,$ whereas the converse bound has coefficient $r/2.$ Since \(0<b<1\), these coefficients do not coincide. Consequently, a single second-order rate normalization cannot simultaneously eliminate the \(n\log n\) terms from both bounds while retaining finite order-\(n\) remainders. In particular, defining
	\[
	R_2=\liminf_{n\to\infty}
	\frac{\log M_n-L_1n\log n}{n}
	\]
	with \(L_1\) yields a finite achievable lower bound, but the available converse retains a residual term proportional to \(n\log n\). Conversely, choosing \(L_1=r/2\) removes the leading term from the converse but does not yield a finite achievable lower bound under the current achievability result.
	
	Therefore, extracting and comparing the order-\(n\) contributions associated with the system parameters requires tighter bounds with a common leading \(n\log n\) coefficient. This can be achieved either by strengthening the achievability result to match the converse coefficient or, alternatively, by deriving a sharper converse that matches the achievable coefficient. Until such a refinement is established, the current results provide separate bounds on the leading growth rate and the achievable order-\(n\) contribution, but do not establish a finite second-order characterization under a single normalization.

	\begin{customcorollary}{1}[Identity MIMO Gaussian Channel]
		\label{lem:identity-mimo-gaussian}
		Consider the standard real MIMO Gaussian channel $\fY_i = \fX_i + \fZ_i,$
	 	$\forall i \in [\![n]\!]$ where $\fZ_i\sim\mathcal{N}(\mathbf{0},\mathbf{I}_m).$ Under the peak-amplitude constraint and the codebook-size
		$M_n(R)=2^{(n\log n)R}$ of Theorem~\ref{Th.MIMO-Capacity}, the
		identification capacity satisfies
		\begin{align}
			\frac{m}{4}
			\leq \mathbb{C}_{\rm I}(\bG)
			\leq \frac{m}{2}.
		\end{align}
	\end{customcorollary}
	
	\begin{proof}
		For the standard MIMO Gaussian channel, the channel matrix is $\fH = \mathbf{I}_m,$ hence $r = \operatorname{rank}(\fH)=m.$ Substituting $r=m$ into bounds in Theorem~\ref{Th.MIMO-Capacity}, yields
		\begin{align}
			\frac{m}{4}
			\leq \mathbb{C}_{\rm I}(\bG)
			\leq \frac{m}{2} .
		\end{align}
		For special case $m=1$, the standard real SISO Gaussian channel \cite{Salariseddigh_IT,Colomer_2025} is retrieved and satisfies
		\begin{align}
			\frac{1}{4}
			\leq \mathbb{C}_{\rm I}(\bG)
			\leq \frac{1}{2}.
		\end{align}
	\end{proof}

	\subsection{Converse Proof via Fidelity of Output Distributions}
	\label{Subsec.Converse2}
	For brevity in the derivations of Lemma~\ref{Lem.Converse2} and to facilitate the subsequent analysis, we adopt the following conventions:
	\begin{itemize}[leftmargin=*]
		\item $A_{\rm max} \triangleq \sqrt{P_{\rm max}}.$
		\item Define the rank-space amplitude bound $B_{\max}$ as follows
		\begin{align}
			0 < A_{\max} \leq B_{\max}
			\triangleq
			A_{\max}
			\max_{k\in[\![r]\!]}
			\sum_{\mu=1}^{N_T}
			\left|[\fV_r]_{\mu k}\right|
			\leq A_{\max}\sqrt{N_T},
		\end{align}
		where the last inequality follows from the Cauchy--Schwarz inequality
		and the orthonormality of the columns of $\fV_r$.
		Consequently, $|c_{i,t,\nu}| \leq B_{\max}, \, \forall i,t,\nu.$
		\item $\mathbbmss{C}_{\text{\tiny conv}} \triangleq \big\{ \fc_i \in \mathbb{R}^{r \times n}:\; |\bar{c}_{i,t,\nu}| \leq A_{\rm max} , \forall \, i,t,\nu \big\} .$
		\item  $e_0' = 1 - e_0^2$ with $e_0$ defined as $e_0 \triangleq 1 - e_1 - e_2$ and $e_1$ and $e_2$ being fix constant error bounds.
	\end{itemize}
	Let $R$ be an achievable identification rate for $\bG$ and fix arbitrary constants $e_1,e_2>0$ with $e_1+e_2<1$. Let $\{(\mathbbmss{C}_{\text{\tiny conv}}^{(n)},\allowbreak \D^{(n)})\}_{n\in\mathbb{N}}$ be a sequence of $(n,\allowbreak M_n(R),\allowbreak e_1^{(n)},\allowbreak e_2^{(n)})\allowbreak$-codes where $e_1^{(n)} \leq e_1$ and $e_2^{(n)} \leq e_2.$ 
	\begin{customlemma}{1}[Minimum Frobenius Distance]
		\label{Lem.Converse2}
		The codebook $\mathbbmss{C}_{\text{\tiny conv}}$ ensures that any two distinct codewords $\fc_{i_1}$ and $\fc_{i_2}$ in $\mathbbmss{C}_{\text{\tiny conv}}$, with $i_1,i_2 \in [\![M_n(R)]\!]$ and $i_1 \neq i_2$, are separated by a distance of at least
		\begin{align}
			\label{Ineq.Conv_Distance}
			\| \fc_{i_1} - \fc_{i_2} \|_F \geq 2 \alpha \triangleq \frac{2 \sigma \sqrt{\ln (1/e_0')}}{\sigma_{1}(\fH)} .
		\end{align}
	\end{customlemma}
	\begin{proof}
		The proof is provided in Appendix \ref{App.Converse_Proof2}.
	\end{proof}

	We next apply Lemma~\ref{Lem.Converse2} to derive an upper bound on the codebook size. Since the minimum distance of the codebook is $2\alpha$, one can place non-overlapping spheres $\mathcal{S}_{\fc_i}(r,n,\alpha)$ centered at points in $\mathbbmss{C}_{\text{\tiny conv}}$. These spheres are entirely inscribed within the hypercube $\mathbbmss{Q}_{\f0}(r,n,2(B_{\rm max} + \alpha))$. Hence, the number of codewords, $M_n(R),$ is bounded by
	\begin{align}
		\label{Ineq.Codebook_Size_UB}
		M_n(R) = \frac{\text{Vol}_F \left(\bigcup_{i=1}^{M_n(R)} \S_{\fc_i}(r,n,\alpha) \right)}{\text{Vol}_F(\S_{\fc_1}(r,n,\alpha))}
		\stackrel{(a)}{\leq} \frac{\text{Vol}_F \big( \mathbbmss{Q}_{\f0}(r,n,2(B_{\rm max} + \alpha)) \big)}{\text{Vol}_F(\S_{\fc_1}(r,n,\alpha))} ,
	\end{align}
	where $(a)$ uses
	\begin{align}
		\mathbbmss{C}_{\text{\tiny conv}} \subseteq \mathbbmss{Q}_{\f0}(r,n,2(B_{\rm max} + \alpha)) 
		= \big\{ \fc_i \in \mathbb{R}^{r \times n} : |c_{i,t,\nu}| \leq B_{\rm max} + \alpha ,\, \forall i,\ \forall t, \, \forall \nu \big\}.
	\end{align}
	Thereby, \eqref{Ineq.Codebook_Size_UB} can be simplifed as follows
	\begin{align}
		\label{Ineq.Log_M_UB}
		\log M_n(R)
		&
		\leq \log \text{Vol}_F \big( \mathbbmss{Q}_{\f0}(r,n,2(B_{\rm max} + \alpha)) \big) - \text{Vol}_F(\S_{\fc_1}(r,n,\alpha))
		\nonumber\\
		& \stackrel{(a)}{=} \log \text{Vol}_2 \big( \mathbbmss{Q}_{\f0}(r,n,2(B_{\rm max} + \alpha)) \big) - \text{Vol}_2(\S_{\fc_1}(r,n,\alpha))
	\end{align}
	where $(a)$ uses Lemma \ref{Lem.Vol_Preserv}. Next, employing Lemma \eqref{Lem.Neg_Log_Vol} we can establish the following lower bound
	\begin{align}
		& \log M_n(R)
		\nonumber\\&
		\leq
		rn\log\left(2(B_{\max}+\alpha)\right)
		+ \frac{rn}{2}\log\left(\frac{rn}{2}\right)
		- rn\log \alpha
		-\frac{rn}{2}\log(\pi e)
		+o(n)
		\nonumber\\
		&\stackrel{(a)}{=}
		rn\log\left(2(B_{\max}+\alpha)\right) +\frac{rn}{2}\log n
		+\frac{rn}{2}\log\left(\frac{r}{2}\right)
		-rn\log \alpha
		-\frac{rn}{2}\log(\pi e)
		+o(n)
		\nonumber\\[1mm]
		&\stackrel{(b)}{=}
		\frac{r}{2}n\log n
		+rn\log\left(2(B_{\max}+\alpha)\right)
		-rn\log \alpha
		+\frac{rn}{2}
		\log\left(\frac{r}{2\pi e}\right)
		+o(n)
		\nonumber\\[1mm]
		&\stackrel{(c)}{=}
		\frac{r}{2}n\log n +n\bigg[ r\log\left(2(B_{\max}+\alpha)\right)
		-r\log\alpha
		+\frac{r}{2}\log\left(\frac{r}{2\pi e}\right)
		\bigg]
		+o(n) \stackrel{(d)}{\leq} \frac{r}{2}n\log n
		\nonumber\\[1mm]
		& +n\Bigg[
		r\log\left(
		2\left(
		\sqrt{N_TP_{\max}}
		+
		\frac{\sigma\sqrt{\ln(1/e_0')}}{\sigma_1(\fH)}
		\right)
		\right)
		-r\log\left(
		\frac{\sigma\sqrt{\ln(1/e_0')}}{\sigma_1(\fH)}
		\right)
		+\frac{r}{2}\log\left(\frac{r}{2\pi e}\right)
		\Bigg]
		+o(n).
		\label{Ineq.Log_M_UB2}
	\end{align}
	where
	\begin{itemize}
		\item $(a)$ uses
		\[
		\log\left(\frac{rn}{2}\right)
		=
		\log n+\log\left(\frac{r}{2}\right),
		\]
		which gives
		\[
		\frac{rn}{2}\log\left(\frac{rn}{2}\right)
		=
		\frac{rn}{2}\log n
		+
		\frac{rn}{2}\log\left(\frac{r}{2}\right).
		\]
		
		\item $(b)$ combines the two logarithmic terms that are independent
		of $n$ according to
		\[
		\frac{rn}{2}\log\left(\frac{r}{2}\right)
		-\frac{rn}{2}\log(\pi e)
		=
		\frac{rn}{2}\log\left(\frac{r}{2\pi e}\right).
		\]
		
		\item $(c)$ factor outs $n$ from all terms that are linear in $n$,
		yielding
		\begin{align}
		\frac{r}{2}n\log n
		+
		n\bigg[
		r\log\left(2(B_{\max}+\alpha)\right)
		-r\log\alpha
		+\frac{r}{2}\log\left(\frac{r}{2\pi e}\right)
		\bigg]
		+o(n).
		\end{align}
		
		\item $(d)$ substitutes
		\[
		B_{\max}
		\leq
		\sqrt{N_TP_{\max}},
		\qquad
		\alpha
		=
		\frac{\sigma\sqrt{\ln(1/e_0')}}{\sigma_1(\fH)}.
		\]
	\end{itemize}

	\begin{align}
		\label{Eq.Log_M_Converse}
		& \log M_n(R)
		\leq \frac{r}{2}n\log n
		\nonumber\\&
		+n\Bigg[
		r\log\left(
		2\left(
		\sqrt{N_TP_{\max}}
		+
		\frac{\sigma\sqrt{\ln(1/e_0')}}{\sigma_1(\fH)}
		\right)
		\right)
		-r\log\left(
		\frac{\sigma\sqrt{\ln(1/e_0')}}{\sigma_1(\fH)}
		\right)
		+\frac{r}{2}\log\left(\frac{r}{2\pi e}\right)
		\Bigg]
		+o(n) ,
	\end{align}
	where the dominant term for fixed $r$ scales as $n \log n$. Hence, setting $M_n(R) = 2^{(n \log n) R}$, we obtain
	\begin{align}
		\label{Ineq.Rate_UB}
		R \leq \frac{r}{2} + O\left(\frac{1}{\log n}\right).
	\end{align}
	Interpreting $R$ as the corresponding sequence of finite-blocklength coding rates $R_n$, the preceding bound implies
	\begin{equation}
		\limsup_{n\to\infty} R_n \leq \frac{r}{2}.
	\end{equation}
	Consequently, any asymptotically achievable rate $R$ associated with such a sequence satisfies
	\begin{equation}
		R \leq \frac{r}{2}.
	\end{equation}
	It follows that the capacity, defined as the supremum of all asymptotically achievable rates, obey
	\begin{align}
		\mathbb{C}_{\text{I}}(\bG) \leq \frac{r}{2}
	\end{align}
	This completes the converse proof of Theorem~\ref{Th.MIMO-Capacity}. \qed

	\subsection{Achievable Coding Scheme in the Rank-Space}
	\label{Subsec.Achievability}
	
	In the following, we deal with a rank-space codebook $\mathbbmss{C} = \{ \fc_i \} \subset \mathbb{R}^{r \times n},$ with $i \in [\![M_n(R)]\!]$ where each codeword matrix reads $\fc_i = [c_{i,t,\mu}].$ More specifically,
	\begin{align*}
		\mathbbmss{C} \subset \mathbbmss{Q}_{\mathbf 0}(r,n,2A_{\mathrm r})
		\triangleq
		\left\{
		\fc_i\in\mathbb{R}^{r\times n}:
		\|\fc_i\|_{\max}\leq A_{\mathrm r}, \forall i
		\right\}.
	\end{align*}
	where $$A_{\mathrm r} \triangleq \sqrt{\frac{P_{\max}}{r}}$$ denote the half-edge of the rank-space hypercube. We formulate the coding problem in the $r$-dimensional effective signal
	space. Thus, each codeword is generated directly in the $r$-dimensional
	rank space rather than in the full $N_T$-dimensional transmit space. Let $\fc_i\in\mathbb{R}^{r\times n}$ denote a rank-space codeword, and let $\fV_r\in\mathbb{R}^{N_T\times r}$
	denote the fixed linear precoding operator associated with the right-singular
	subspace of $\fH$. Since $\fV_r^{\mathsf T}\fV_r=\mathbf I_r,$
	the mapping
	\begin{equation}
		\bar{\fc}_i \triangleq \fV_r\fc_i \in \mathbb{R}^{N_T\times n}
	\end{equation}
	is an isometric embedding of the rank-space codeword $\fc_i$ into the
	physical $N_T$-dimensional transmit space. In particular,
	$\fV_r$ maps $\fc_i$ into the subspace $$\operatorname{span}(\fV_r)\subseteq\mathbb{R}^{N_T}$$ spanned by the columns of $\fV_r$. Consequently, this mapping preserves Euclidean distances, that is, since $\fV_r^{\mathsf T}\fV_r=\mathbf I_r$, the Frobenius norm
	of the difference between any two physically precoded codewords
	satisfies
	\begin{align}
		\|\bar{\fc}_i-\bar{\fc}_j\|_{\mathrm F}^2
		&=\|\fV_r\fc_i-\fV_r\fc_j\|_{\mathrm F}^2\\
		&=\|\fV_r(\fc_i-\fc_j)\|_{\mathrm F}^2\\
		&=\operatorname{tr}\!\left(
		(\fc_i-\fc_j)^{\mathsf T}
		\fV_r^{\mathsf T}\fV_r
		(\fc_i-\fc_j)
		\right)\\
		&=\operatorname{tr}\!\left(
		(\fc_i-\fc_j)^{\mathsf T}
		(\fc_i-\fc_j)
		\right)\\
		&=\|\fc_i-\fc_j\|_{\mathrm F}^2.
	\end{align}
	Taking square roots yields
	\begin{equation}
			\|\bar{\fc}_i-\bar{\fc}_j\|_{\mathrm F}
			=
			\|\fc_i-\fc_j\|_{\mathrm F},
			\qquad \forall\,i,j.
	\end{equation}
	Accordingly, the codebook is designed in the rank space, whereas the peak-amplitude constraint is enforced on the corresponding antenna-domain codewords after precoding.

	For every $\fc_i\in\mathbbmss{C}$, the peak amplitude of the
	physically precoded codeword $\bar{\fc}_i=\fV_r\fc_i$ satisfies
	\begin{align}
		\|\bar{\fc}_i\|_{\max}
		=\|\fV_r\fc_i\|_{\max}
		& =\max_{1\leq k\leq N_T}
		\|\mathbf v_k^{\mathsf T}\fc_i\|_{\max}
		\nonumber\\
		&\stackrel{(a)}{\leq}
		\max_{1\leq k\leq N_T}
		\|\mathbf v_k\|_2\,\|\fc_i\|_{\max,2}
		\nonumber\\
		&\stackrel{(b)}{\leq}
		\|\fc_i\|_{\max,2}
		\leq \sqrt{r}\,\|\fc_i\|_{\max}
		\nonumber\\
		&\stackrel{(c)}{\leq}
		\sqrt{r}\,A_{\mathrm r}
		=\sqrt{P_{\max}},
	\end{align}
	where $\mathbf v_k^{\mathsf T}$ denotes the $k$-th row of
	$\fV_r$, and
	\[
	\|\fc_i\|_{\max,2}
	\triangleq \max_{1\leq \ell\leq n}
	\|[\fc_i]_{(:,\ell)}\|_2.
	\]
	\begin{itemize}
		\item $(a)$ The Cauchy--Schwarz inequality states that
		$|\mathbf x^{\mathsf T}\mathbf y|
		\leq \|\mathbf x\|_2\|\mathbf y\|_2$ for any real-valued
		vectors $\mathbf x$ and $\mathbf y$. Applying it to each
		column of $\fc_i$ and each row $\mathbf v_k^{\mathsf T}$
		of $\fV_r$ yields the desired inequality.
		
		\item $(b)$ Since $\fV_r^{\mathsf T}\fV_r=\mathbf I_r$,
		the columns of $\fV_r$ are orthonormal, which implies
		$\|\mathbf v_k\|_2\leq 1$ for every row $\mathbf v_k^{\mathsf T}$.
		Moreover, $\|\fc_i\|_{\max,2}\leq
		\sqrt{r}\,\|\fc_i\|_{\max}$ follows because each column
		of $\fc_i$ contains $r$ entries.
		
		\item $(c)$ The rank-space peak-amplitude constraint gives
		$\|\fc_i\|_{\max}\leq A_{\mathrm r}$, where
		$A_{\mathrm r}=\sqrt{P_{\max}/r}$.
	\end{itemize}
	Therefore, the physically precoded codeword satisfies the
	required peak-amplitude constraint.

	\subsection{Rank-Space Signaling}
	
	Recall that $\fV_r\in\mathbb{R}^{N_T\times r}$
	contains the right singular vectors associated with the nonzero singular
	values of $\fH$. The received signal depends only on the projection of the
	transmit codeword onto $\operatorname{span}(\fV_r)$. Indeed, defining $\fP_r\triangleq\fV_r\fV_r^{\mathsf T},$
	we have $\fH=\fH\fP_r,$ and hence, for any transmit codeword $\fX\in\mathbb{R}^{N_T\times n}$, we have
	\begin{equation}
	\fH\fX
	=
	\fH\fP_r\fX
	=
	\fH\fV_r\fV_r^{\mathsf T}\fX.
	\end{equation}
	Thus, the component
	\begin{equation}
		(\mathbf I_{N_T}-\fP_r)\fX
	\in\operatorname{null}(\fH)
	\end{equation}
	is completely invisible at the receiver. Consequently, signaling over
	more than $r$ spatial dimensions cannot provide additional
	information-bearing degrees of freedom, since any component beyond the
	rank-$r$ signal space lies in the channel null space. Accordingly, we formulate the coding problem directly in the
	$r$-dimensional rank space and represent each codeword as
	\[
	\bar{\fc}_i
	\triangleq
	\fV_r \fc_i
	\in\mathbb{R}^{N_T\times n}.
	\]
	Signaling over fewer than $r$ dimensions is possible but does not exploit
	the full spatial signal space supported by the channel. Using all $r$
	dimensions therefore captures the full spatial-multiplexing gain
	associated with the channel rank.

	\begin{customlemma}{2}[Channel-Domain Distance]
		\label{Lem.Min_dis_con_cb}
		
		Let $\fH=\fU_r\fSigma_r\fV_r^{\mathsf T}$ be the reduced SVD of a rank-$r$ channel matrix, and let
		$\fc_i,\fc_j\in\mathbb{R}^{r\times n}$ with physical codewords $\bar{\fc}_k \triangleq \fV_r \fc_k$, $k\in\{i,j\}$.
		Then, the channel-domain distance is lower-bounded by the effective-space distance scaled by $\sigma_r(\fH)$:
		\begin{equation}
			\|\fH(\bar{\fc}_i-\bar{\fc}_j)\|_F
			\geq
			\sigma_r(\fH)\|\fc_i-\fc_j\|_F.
		\end{equation}
	\end{customlemma}
	
	\begin{proof}
	 The proof is provided in Appendix \ref{App.Min_dis_con_cb}.
	\end{proof}

	\subsection{Rate Analysis}
	We use a packing arrangement of non-overlapping \(rn\)-dimensional hyper spheres in the Frobenius norm of radius $\tau_n = \sqrt{n\epsilon_{n}}$ in a \(rn\)-dimensional hyper cube with edge length $P_{\rm max},$ where
	\begin{equation}
		\epsilon_{n} = \frac{a}{\sigma_r^2(\fH) n^{( 1 - b) / 2}},
	\label{Eq.Epsilon_n}
	\end{equation}
	with $a>0$ being a fixed constant and $b\in(0,1)$ being an arbitrary but fixed constant, chosen sufficiently small. Throughout the analysis, $b$ is held fixed independently of the blocklength $n$.

	Let $\mathscr{S}$ denote a sphere packing, i.e., an arrangement of $M$ non-overlapping spheres $\S_{\fc_i}(r,n,\tau_n),\, i \in [\![M_n(R)]\!],$ that are packed inside the larger cube $\mathbbmss{Q}_{\f0}(r,n,2A_{\mathrm r}),$ see Figure \ref{Fig.Density}. We require only that the centers of the spheres lie within the hypercube $\mathbbmss{Q}_{\f0}(r,n,2A_{\mathrm r}),$ that the spheres are mutually disjoint, and that each sphere exhibits a non-empty intersection with $\mathbbmss{Q}_{\f0}(r,n,2A_{\mathrm r}).$ The packing density \cite{CHSN13} is
	\begin{align}
		\Updelta_{r \times n}(\mathscr{S}) \triangleq \frac{\text{Vol}_F \Big(\mathbbmss{Q}_{\f0}(r,n,2A_{\mathrm r}) \cap \bigcup_{i=1}^{M_n(R)}\S_{\fc_i}(r,n,\tau_n) \Big)}{\text{Vol}_F \big(\mathbbmss{Q}_{\f0}(r,n,A_{\mathrm r}) \big)},
		\label{Eq.Def_Density}
	\end{align}
	\begin{figure}[t]
		\centering
		\scalebox{1}{
	\begin{tikzpicture}[scale=.58,rotate=0][thick]
		
		\draw (-1.41,-1.41) circle (1cm);
		\draw [inner color = mycolor8!4, outer color =  mycolor7!60, fill opacity=0.7] (-1.41,-1.41) circle (1cm);
		\node [fill=black, shape=circle, inner sep=.8pt] ($.$) at (-1.41,-1.41) {};
		\draw (0,0) circle (1cm);
		\draw [inner color = mycolor8!4, outer color =  mycolor7!60, fill opacity=0.5] (0,0) circle (1cm);
		\draw (1.41,1.41) circle (1cm);
		\draw [inner color = mycolor8!4, outer color =  mycolor7!60, fill opacity=0.5] (1.41,1.41) circle (1cm);
		\node [fill=black, shape=circle, inner sep=.8pt] ($.$) at (1.41,1.41) {};
		\draw (+.52,-1.93) circle (1cm);
		\draw [inner color = mycolor8!4, outer color =  mycolor7!60, fill opacity=0.5] (+.52,-1.93) circle (1cm);
		\node [fill=black, shape=circle, inner sep=.8pt] ($.$) at (+.52,-1.93) {};
		\draw (1.93,-.52) circle (1cm);
		\draw [inner color = mycolor8!4, outer color =  mycolor7!60, fill opacity=0.5] (1.93,-.52) circle (1cm);
		\node [fill=black, shape=circle, inner sep=.8pt] ($.$) at (1.93,-.52) {};
		\draw (-1.93,+.52) circle (1cm);
		\draw [inner color = mycolor8!4, outer color =  mycolor7!60, fill opacity=0.5] (-1.93,.52) circle (1cm);
		\node [fill=black, shape=circle, inner sep=.8pt] ($.$) at (-1.93,.52) {};
		\draw (-.52,1.93) circle (1cm);
		\draw [inner color = mycolor8!4, outer color =  mycolor7!60, fill opacity=0.5] (-.52,1.93) circle (1cm);
		\node [fill=black, shape=circle, inner sep=.8pt] ($.$) at (-.52,1.93) {};

		\draw (2.45,-2.45) circle (1cm);
		\draw [inner color=gray!4,outer color=gray!20, fill opacity=0.5] (2.45,-2.45) circle (1cm);
		\node [fill=black, shape=circle, inner sep=.8pt] ($.$) at (2.45,-2.45) {};
		\draw (2.81,2.81) circle (1cm);
		\draw [inner color=gray!4,outer color=gray!20, fill opacity=0.5] (2.81,2.81) circle (1cm);
		\node [fill=black, shape=circle, inner sep=.8pt] ($.$) at (2.81,2.81) {};
		\draw (-2.45,2.45) circle (1cm);
		\draw [inner color=gray!4,outer color=gray!20, fill opacity=0.5] (-2.45,2.45) circle (1cm);
		\node [fill=black, shape=circle, inner sep=.8pt] ($.$) at (-2.45,2.45) {};
		\draw (-2.81,-2.81) circle (1cm);
		\draw [inner color=gray!4,outer color=gray!20, fill opacity=0.5] (-2.81,-2.81) circle (1cm);
		\node [fill=black, shape=circle, inner sep=.8pt] ($.$) at (-2.81,-2.81) {};
		
		\foreach \s in {3}
		{
			\draw [thick, mycolor12] (-\s,-\s) -- (\s,-\s) -- (\s,\s) -- (-\s,\s) -- (-\s,-\s);
		}
		
		
		\draw [>=open triangle 45,<->,dashed] (0,0) -- (3,0) node [right,font=\normalsize] {$A_{\mathrm r}$};
		\draw [dashed] (-2.81,-2.81) -- (-3.71,-2.83) node [left,font=\normalsize] {$\tau_n$};
		\draw [>=open triangle 45,<->,dashed] (3,-3) -- (-3,3) node [above left,font=\normalsize] {$2A_{\mathrm r}\sqrt{n}$};
\end{tikzpicture}}
		\caption{Illustration of a saturated sphere packing inside a hypercube, where spheres of radius $2\tau_n = \sqrt{n\epsilon_n}$ cover a larger cube. Dark gray colored spheres are not entirely contained within the larger cube, and yet they contribute to the packing arrangement. As we assign a codeword to each sphere center, the $1$-norm and arithmetic mean of a codeword are bounded by $A_{\mathrm r}$ as required.}
		\vspace{-2mm}
		\label{Fig.Density}
	\end{figure}
	We invoke a saturated packing argument as accomplished in \cite{CHSN13}. Specifically, consider a saturated packing of spheres $\bigcup_{i=1}^{M_n(R)} \S_{\fc_i}(r,n,\tau_n)$ with radius $\tau_n = \sqrt{n\epsilon_{n}}$, embedded within the hypercube $\mathbbmss{Q}_{\f0}(r,n,2A_{\mathrm r})$. Rather than pursuing a direct volume-density analysis of the saturated packing, we take a different route. In finite-dimensional bounded geometries, obtaining explicit and sufficiently sharp lower bounds on the packing density, such as the classical \(2^{-n}\)-type bounds \cite{CHSN13}, can be difficult, particularly when the ambient region is a hypercube and the spheres are permitted to intersect its boundary. Such density bounds would require a sufficiently sharp characterization of the geometry of the packing near the boundary and, in general, do not yield a convenient expression for the codebook cardinality. We therefore bypass this geometric difficulty altogether. The key observation is that saturation imposes a stronger and more tractable property on the \emph{centers} of the packing spheres: they necessarily form a covering of the ambient hypercube. This converts the packing problem into a covering problem, for which a simple volume argument directly yields a lower bound on the number of centers, and hence on the number of codewords. We next exploit the saturation property to obtain a covering property of the packing centers.
	
	By construction, all packing centers satisfy
	\begin{equation}
		\fc_i \in \mathbbmss{Q}_{\f0}(r,n,2A_{\mathrm r}),
		\qquad
		i \in [\![M_n(R)]\!],
		\label{Eq:center_in_cube}
	\end{equation}
	and the non-overlap condition implies
	\begin{equation}
		\| \fc_i-\fc_j \|_F
		\geq 2\tau_n,
		\qquad
		i\neq j.
		\label{Eq:packing_separation}
	\end{equation}
	The saturation of the packing means that no additional sphere of radius $\tau_n$, whose center lies in $\mathbbmss{Q}_{\f0}(r,n,2A_{\mathrm r})$, can be added while preserving the non-overlap condition. This property implies that the collection of packing centers constitutes a $2\tau_n$-covering of the hypercube. To see this, suppose, to the contrary, that there exists a point $\fx \in \mathbbmss{Q}_{\f0}(r,n,A_{\mathrm r})$ such that
	\begin{equation}
		\| \fx-\fc_i \|_F
		> 2\tau_n,
		\qquad
		\forall i\in[\![M_n(R)]\!].
		\label{Eq:uncovered_point}
	\end{equation}
	Since $\fx$ belongs to $\mathbbmss{Q}_{\f0}(r,n,A_{\mathrm r})$, a sphere $\S_{\fx}(r,n,\tau_n)$ centered at $\fx$ is an admissible additional sphere. Moreover, \eqref{Eq:uncovered_point} implies that the distance between the centers of $\S_{\fx}(r,n,\tau_n)$ and $\S_{\fc_i}(r,n,\tau_n)$ is strictly larger than $2\tau_n$ for every $i\in[\![M_n(R)]\!]$. Hence, $\forall i\in[\![M_n(R)]\!]$ it holds
	\begin{equation}
		\S_{\fx}(r,n,\tau_n)
		\cap
		\S_{\fc_i}(r,n,\tau_n)
		=
		\varnothing .
		\label{Eq:new_sphere_disjoint}
	\end{equation}
	Thus, $\S_{\fx}(r,n,\tau_n)$ can be added to the packing without violating the non-overlap condition, which contradicts the saturation assumption. Therefore,
	\begin{equation}
		\forall \fx\in\mathbbmss{Q}_{\f0}(r,n,2A_{\mathrm r}),
		\qquad
		\min_{i\in[\![M_n(R)]\!]}
		\| \fx-\fc_i \|_F
		\leq 2\tau_n.
		\label{Eq:2tau_covering}
	\end{equation}
	Equivalently,
	\begin{equation}
		\mathbbmss{Q}_{\f0}(r,n,2A_{\mathrm r})
		\subseteq
		\bigcup_{i=1}^{M_n(R)}
		\S_{\fc_i}(r,n,2\tau_n).
		\label{Eq:covering}
	\end{equation}
	Thus, every saturated packing of radius-$\tau_n$ spheres induces a $2\tau_n$-covering of the entire hypercube. We can now lower-bound the number of packing centers by comparing volumes. Taking $\text{Vol}_F$ on both sides of \eqref{Eq:covering} gives
	\begin{equation}
		\text{Vol}_F
		\left(
		\mathbbmss{Q}_{\f0}(r,n,2A_{\mathrm r})
		\right)
		\leq
		\text{Vol}_F
		\left(
		\bigcup_{i=1}^{M_n(R)}
		\S_{\fc_i}(r,n,2\tau_n)
		\right).
		\label{Eq:volume_covering}
	\end{equation}
	Applying subadditivity of volume to the union on the right-hand side yields
	\begin{equation}
		\text{Vol}_F
		\left(
		\mathbbmss{Q}_{\f0}(r,n,2A_{\mathrm r})
		\right)
		\leq
		\sum_{i=1}^{M_n(R)}
		\text{Vol}_F
		\left(
		\S_{\fc_i}(r,n,2\tau_n)
		\right).
		\label{Eq:volume_subadditivity}
	\end{equation}
	Since all covering spheres have the same radius, their volumes are identical and independent of their centers. Hence,
	\begin{equation}
		\text{Vol}_F
		\left(
		\mathbbmss{Q}_{\f0}(r,n,2A_{\mathrm r})
		\right)
		\leq
		M_n(R) \cdot 
		\text{Vol}_F
		\left(
		\S_{\f0}(r,n,2\tau_n)
		\right).
		\label{Eq:volume_comparison}
	\end{equation}
	Consequently,
	\begin{equation}
		M_n(R)
		\geq
		\frac{
			\text{Vol}_F
			\left(
			\mathbbmss{Q}_{\f0}(r,n,2A_{\mathrm r})
			\right)
		}{
			\text{Vol}_F
			\left(
			\S_{\f0}(r,n,2\tau_n)
			\right)
		}.
		\label{Eq:M_lower_bound_volume}
	\end{equation}

	Next, we proceed to establish the connection between the Frobenius geometry on
	$\mathbb{R}^{r\times n}$ and the standard Euclidean geometry on
	$\mathbb{R}^{rn}$. In particular, the vectorization map is an isometry between
	these two spaces, and therefore preserves distances, norms, and the
	corresponding Lebesgue volumes. Consequently, Frobenius-norm balls and
	hyperspheres in $\mathbb{R}^{r\times n}$ can be identified, via vectorization,
	with the usual Euclidean balls and hyperspheres in $\mathbb{R}^{rn}$. To establish the lemma, we first recall the following definitions and
	preliminaries concerning the Frobenius and Euclidean structures, the
	corresponding Lebesgue measures, and the vectorization map.

	\begin{itemize}[leftmargin=*]
		\item \textbf{Frobenius inner product.}
		For $\fX,\fY\in\mathbb{R}^{r\times n}$, the Frobenius inner product is
		defined by
		\[
		\langle \fX,\fY\rangle_F
		\triangleq
		\operatorname{tr}(\fX^{\mathsf T}\fY)
		=
		\sum_{a=1}^{r}\sum_{b=1}^{n}x_{ab}y_{ab}.
		\]
		
		\item \textbf{Frobenius norm.}
		The norm induced by the Frobenius inner product is
		\[
		\|\fX\|_F
		\triangleq
		\sqrt{\langle \fX,\fX\rangle_F}
		=
		\left(
		\sum_{a=1}^{r}\sum_{b=1}^{n}x_{ab}^{2}
		\right)^{1/2}.
		\]
		
		\item \textbf{Euclidean inner product and norm.}
		For $\fx,\fy \in\mathbb{R}^{rn}$, the standard Euclidean inner
		product and norm read  
		\[
		\langle \fx,\fy \rangle_2
		\triangleq
		\fx^{\mathsf T}\fy
		=
		\sum_{k=1}^{rn}x_k y_k,
		\]
		and
		\[
		\| \fx \|_2 \triangleq \sqrt{\langle \fx,\fx \rangle_2}	= \left(
		\sum_{k=1}^{rn}x_k^2
		\right)^{1/2}.
		\]
		\item \textbf{Frobenius Lebesgue volume.}
		Let $\operatorname{Vol}_F$ denote the $rn$-dimensional Lebesgue
		measure on $\mathbb{R}^{r\times n}$ induced by the Frobenius inner
		product $\langle\cdot,\cdot\rangle_F$.
		\item \textbf{Euclidean Lebesgue volume.}
		Let $\operatorname{Vol}_2$ denote the standard $rn$-dimensional
		Lebesgue measure on $\mathbb{R}^{rn}$ induced by the standard
		Euclidean inner product $\langle\cdot,\cdot\rangle_2$.
		\item \textbf{Vectorization map.}
		Let
		\begin{equation}
			\operatorname{vec}:
			\bigl(\mathbb{R}^{r\times n},\|\cdot\|_F\bigr)
			\longrightarrow
			\bigl(\mathbb{R}^{rn},\|\cdot\|_2\bigr)
		\end{equation}
		be the column-wise vectorization map. For
		$\fX=(x_{ij})\in\mathbb{R}^{r\times n}$, it is defined by
		\begin{equation}
			\operatorname{vec}(\fX)
			=
			\begin{pmatrix}
				\fX_{:,1}\\
				\fX_{:,2}\\
				\vdots\\
				\fX_{:,n}
			\end{pmatrix}
		\end{equation}
		or, equivalently, componentwise by
		\begin{equation}
			\bigl[\operatorname{vec}(\fX)\bigr]_{(j-1)r+i}
			=
			x_{ij},
			\qquad
			1\leq i\leq r,\quad
			1\leq j\leq n.
		\end{equation}
	\end{itemize}

	\begin{customlemma}{3}[Volume Preservation under Vectorization]
		\label{Lem.Vol_Preserv}
		For every Lebesgue-measurable set
		$\mathcal{A}\subseteq\mathbb{R}^{r\times n}$,
		the Frobenius volume of $\mathcal{A}$ coincides with the Euclidean
		volume of its vectorized image:	$\operatorname{Vol}_{F}(\mathcal{A}) = \operatorname{Vol}_{2}\!\left(\operatorname{vec}(\mathcal{A})\right)$ where
		\begin{equation}
			\operatorname{vec}:
			\bigl(\mathbb{R}^{r\times n},\|\cdot\|_F\bigr)
			\longrightarrow
			\bigl(\mathbb{R}^{rn},\|\cdot\|_2\bigr) .
		\end{equation}
	\end{customlemma}
	
	\begin{proof}
		The proof is provided in Appendix \ref{App.Vol_Preserv}.
	\end{proof}

	In particular, for
	\begin{equation}
		\S_{\fx_0}(r,n,\tau)
		=
		\left\{
		\fx\in\mathbb{R}^{r\times n}:
		\|\fx-\fx_0\|_F\leq\tau
		\right\},
	\end{equation}
	the distance-preserving property of the employed vectorization in Lemma \ref{Lem.Vol_Preserv} gives
	\begin{equation}
		\operatorname{vec}\!\left(\S_{\fx_0}(r,n,\tau)\right)
		=
		\S_{\operatorname{vec}(\fx_0)}(rn,\tau),
	\end{equation}
	where
	\begin{equation}
		\S_{\fx_0}(rn,\tau)
		=
		\left\{
		\fx\in\mathbb{R}^{rn}:
		\|\fx-\fx_0\|_2\leq\tau
		\right\}.
	\end{equation}
	Consequently,
	\begin{equation}
		\label{Eq.Vol_Equi}
		\operatorname{Vol}_{F}\!\left(\S_{\fx_0}(r,n,\tau)\right)
		=
		\operatorname{Vol}_{2}\!\left(
		\S_{\operatorname{vec}(\fx_0)}(rn,\tau)
		\right).
	\end{equation}
	
	We observe that, the volume of an $n$-dimensional hypersphere of radius \( \tau \) is given by \cite[Eq.~(16)]{CHSN13},
	\begin{align}
		\operatorname{Vol}_{2}\big(\mathcal S_{\mathbf c_i}(n,r)\big) = \frac{(\sqrt{\pi}\,r)^n}{\Gamma\!\left( 1 + \frac{n}{2} \right)}
		\label{Eq.VolS} .
	\end{align}
	Hence, combining \eqref{Eq.Vol_Equi} and \eqref{Eq.VolS} yields
	\begin{align}
		\label{Eq.Vol_F}
		\operatorname{Vol}_{F}\!\left(\S_{\fc_i}(r,n,\tau)\right) & = \operatorname{Vol}_{2}\big(\mathcal S_{\operatorname{vec}(\fc_i)}(rn,\tau)\big)
		\nonumber\\&
		= \frac{(\sqrt{\pi}\,\tau)^{rn}}{\Gamma\!\left(1 + \frac{rn}{2}\right)} .
	\end{align}
	Furthermore, the volume of an \(rn\)-dimensional hypercube with side length
	\(A_{\mathrm r}\) is
	\begin{equation}
		\operatorname{Vol}_{2}
		\big(
		\mathbbmss{Q}_{\f0}(n,2A_{\mathrm r})
		\big)
		=
		(2A_{\mathrm r})^{n}.
	\end{equation}
	By Lemma \ref{Lem.Vol_Preserv},
	\begin{align}
		\label{Eq.Vol_Q_F}
		\operatorname{Vol}_{F}
		\left(
		\mathbbmss{Q}_{\f0}(r,n,2A_{\mathrm r})
		\right)
		&=
		(2A_{\mathrm r})^{rn}.
	\end{align}

	Next, we observe given that each hyper sphere has volume $\mathrm{Vol}(\mathcal{S}_{\mathbf{c}_1}(r,n,\tau_n))$ and all centers lie within $\mathbbmss{Q}_{\mathbf{0}}(r,n,2A_{\mathrm r})$, recalling \eqref{Eq:M_lower_bound_volume}, the number of packed spheres, $M$, reads
	\begin{equation}
		M_n(R)
		\geq
		\frac{
			\text{Vol}_F
			\left(
			\mathbbmss{Q}_{\f0}(r,n,2A_{\mathrm r})
			\right)
		}{
			\text{Vol}_F
			\left(
			\S_{\f0}(r,n,2\tau_n)
			\right)
		}.
		\label{Eq:M}
	\end{equation}
	The bound in \eqref{Eq:M} admits the following simplification
	\begin{align}
		\log M_n(R)
		& \geq
		\log
		\text{Vol}_F
		\left(
		\mathbbmss{Q}_{\f0}(r,n,2A_{\mathrm r})
		\right)
		-
		\log
		\text{Vol}_F
		\left(
		\S_{\f0}(r,n,2\tau_n)
		\right).
		\label{Eq.Log_M_0}
	\end{align}
	To further simplify \eqref{Eq.Log_M_0}, we present the following lemma establishing a lower bound on the negative log-volume of an $n$-dimensional hypersphere.

	\begin{customlemma}{4}[Log-Volume of Hypersphere]
		\label{Lem.Neg_Log_Vol}
		Let $n \geq 1$ and $r_0 >0.$ Then, the negative log-volume of $\S_{\fc_1}(n,r_0)$ reads
		\begin{align}
			- \log \text{Vol}(\S_{\fc_1}(n,r_0)) 
			= \frac n2\log\frac n2
			-n\log r_0
			-\frac n2\log(\pi e)
			+o(n).
		\end{align}
	\end{customlemma}
	\begin{proof}
		The proof is provided in Appendix \ref{App.Neg_Log_Vol}.
	\end{proof}

	Combining \eqref{Eq.Log_M_0} with the lower bound on the negative logarithm of the hypersphere volume established in Lemma \ref{Lem.Neg_Log_Vol}, we obtain the following lower bound on the logarithm of the achievable codebook cardinality:
	\begin{align}
		\label{Eq.Log_M}
		\log M_n(R)
		& \geq
		\log
		\text{Vol}_F
		\left(
		\mathbbmss{Q}_{\f0}(r,n,2A_{\mathrm r})
		\right)
		-
		\log
		\text{Vol}_F
		\left(
		\S_{\f0}(r,n,2\tau_n)
		\right)
		\nonumber\\
		& \stackrel{(a)}{=} \log
		\text{Vol}_2
		\left(
		\mathbbmss{Q}_{\f0}(rn,2A_{\mathrm r})
		\right)
		-
		\log
		\text{Vol}_2
		\left(
		\S_{\f0}(rn,2\tau_n)
		\right)
		\nonumber\\
		& \stackrel{(b)}{=}
		\log \text{Vol}_2
		\left(
		\mathbbmss{Q}_{\f0}(rn,2A_{\mathrm r})
		\right)
		-
		\log
		\text{Vol}_2
		\left(
		\S_{\f0}(rn,\tau_n)
		\right) - rn
		\nonumber\\&
 		\stackrel{(c)}{=} rn \log A_{\mathrm r}  + \frac{rn}{2}\log\frac{rn}{2}
 		-rn\log \tau_n
 		-\frac{rn}{2}\log(\pi e)
 		+o(n)
	\end{align}
	where
	\begin{itemize}
		\item $(a)$ exploits Lemma \ref{Lem.Vol_Preserv} and \eqref{Eq.Vol_Equi}. 
		\item $(b)$ employs \eqref{Eq.Vol_F}. 
		\item $(c)$ uses Lemma \eqref{Lem.Neg_Log_Vol} with setting $n = rn.$
	\end{itemize}Next, we observe that employing \eqref{Eq.Epsilon_n}, we obtain
	\begin{align}
		\tau_n = \sqrt{n\epsilon_{n}} = \frac{\sqrt{a}n^{(1 + b) / 4}}{\sigma_r(\fH)} .
	\end{align}
	Thereby, \eqref{Eq.Log_M} is simplified to the following
	\begin{align}
		\log M_n(R)
		& \geq rn \log A_{\mathrm r}  + \frac{rn}{2}\log\frac{rn}{2}
		-rn\log \tau_n
		-\frac{rn}{2}\log(\pi e)
		+o(n)
		\nonumber\\
		& = rn \log A_{\mathrm r} + \frac{rn}{2}\log\frac{rn}{2} - rn \log \frac{\sqrt{a}n^{(1 + b) / 4}}{\sigma_r(\fH)}
		-\frac{rn}{2}\log(\pi e) + o(n) .
	\end{align} 
	Therefore, the logarithm of the achievable codebook cardinality admits the following bound
	\begin{align}
		\label{Eq.Log_M_Achievability}
		\log M_n(R) & \geq \frac{1-b}{4}\,rn\log rn
		+
		rn\log\left(
		\frac{\sigma_r(\fH)\sqrt{P_{\max}}}{\sqrt{2\pi e\,a}\;r^{(1-b)/4}}
		\right)
		+o(n).
	\end{align}
	Since $r$ is fixed, the dominant term in the above lower bound scales as $n\log n$. Hence, setting $M_n(R)=2^{(n\log n)R},$
	we obtain
	\begin{align}
		\label{Ineq.Rate_LB}
		R
		&\geq
		\frac{r(1-b)}{4}
		+
		O\left(\frac{1}{\log n}\right).
	\end{align}
	Interpreting $R$ as the corresponding sequence of finite-blocklength
	coding rates $R_n$, the preceding bound implies
	\begin{equation}
		\liminf_{n\to\infty} R_n
		\geq
		\frac{r(1-b)}{4}.
	\end{equation}
	Consequently, the corresponding asymptotically achievable rate $R$
	satisfies
	\begin{equation}
		R
		\geq
		\frac{r(1-b)}{4}.
	\end{equation}
	Since $b>0$ can be chosen arbitrarily small, letting $b\downarrow 0$
	yields
	\begin{equation}
		R\geq \frac{r}{4}.
	\end{equation}
	Therefore, defining the identification capacity as the supremum of
	all asymptotically achievable rates, we obtain
	\begin{align}
		\mathbb{C}_{\mathrm{I}}(\bG)
		\geq
		\frac{r}{4}.
	\end{align}
	This establishes the lower bound on the identification capacity and completes the achievability proof of Theorem \ref{Th.MIMO-Capacity}.

	\textbf{\textcolor{mycolor12}{Error Analysis.}}
	Let $e_1, e_2, \eta_0, \zeta_0, \zeta_1 > 0$ be arbitrarily small constants. Before proceeding, we set the following conventions to ensure a clear and focused analysis:
	\begin{itemize}[leftmargin=*]
		\item $\fY(i) = \fH \bar{\fc}_i + \fZ \in \mathbb{R}^{N_R \times n}$ denotes the received signal matrix conditioned on $\fx = \bar{\fc}_i.$
		\item $\bar{\fd}_{i,j}^{\fH} \triangleq \fH ( \bar{\fc}_i - \bar{\fc}_j )$ is the transformed codeword difference.
		\item $U_{i,j} \triangleq n^{-1} \big( \big\| \fZ \big\|_F^2 + \big\| \bar{\fd}_{i,j}^{\fH} \big\|_{F}^2 \big),\ V_{i,j} \triangleq 2n^{-1}\big\langle \fZ,\, \bar{\fd}_{i,j}^{\fH} \big\rangle_F.$
		\item $W_{i,j} \triangleq U_{i,j} + V_{i,j}.$
		\item $\mathbbmss{E}_0 \triangleq \{  \fZ \in \mathbb{R}^{N_R \times n} : |V_{i,j}| > \delta_n \}.$
		\item $\mathbbmss{E}_1 \triangleq \{  \fZ \in \mathbb{R}^{N_R \times n} : U_{i,j} - N_R \sigma^2 \leq 2\delta_n \}.$
		\item $\mathbbmss{E}_2 \triangleq \{  \fZ \in \mathbb{R}^{N_R \times n} : W_{i,j} - N_R \sigma^2 \leq \delta_n \}.$
	\end{itemize}

	\textbf{\textcolor{mycolor12}{Type I Error.}}
	The type I errors occur when the transmitter sends $\fc_i,$ yet $\fY \notin \mathbbmss{D}_i.$ For every $i \in [\![M_n(R)]\!],$ the type I error probability is bounded by 
	\begin{align}
		\label{Eq.TypeIError}
		P_{e,1}(i) \leq \Pr\big( \fY(i) \in \mathbbmss{D}_i^c \big)
		= \Pr\big( d_F( \fY(i),\bar{\fc}_i ) > \delta_n \big) .
	\end{align}
	To bound $P_{e,1}(i),$ we perform the one-sided form of Chebyshev's inequality, namely,
	\begin{align}
		\label{Ineq.TypeI_Cheb}
		P_{e,1}(i) & \leq \Pr\big( \big| d_F(\fY(i),\bar{\fc}_i) - \mathbb{E} \big[ d_F(\fY(i),\bar{\fc}_i) \big] \big| > \delta_n \big)
		\nonumber\\&
		\le \frac{\text{Var}[d_F(\fY(i),\bar{\fc}_i)]}{\delta_n^2} .
	\end{align}
	Now, oberve that the expectation of the decoding measure reads
	\begin{align}
		\label{Ineq.Exp_Decoding_Metric}
		\mathbb{E} \big[ d_F( \fY(i),\bar{\fc}_i ) \big] & \stackrel{(a)}{=} \mathbb{E} \Big[ n^{-1} \big\| \fY(i) - \fH \bar{\fc}_i \big\|_{F}^2 \Big] - N_R\sigma^2
		\nonumber\\&
		\stackrel{(b)}{=} \mathbb{E} \bigg[ n^{-1} \sum_{t=1}^n \sum_{\nu=1}^{N_R}  Z_{i,t,\nu}^2 \bigg] - N_R\sigma^2
		\nonumber\\&
		\stackrel{(c)}{=} n^{-1} \sum_{t=1}^n \sum_{\nu=1}^{N_R} \mathbb{E} [ Z_{i,t,\nu}^2 ] - N_R\sigma^2
		\nonumber\\&
		= \frac 1n \cdot n N_R\sigma^2 - N_R\sigma^2 = 0 ,
	\end{align}
	where
	\begin{itemize}
		\item $(a)$ uses \eqref{Eq.Dec_Measure}, with setting $\fY = \fY(i).$
		\item $(b)$ follows from the Frobenius norm expansion.
		\item $(c)$ uses linearity of the expectation.
		\item $(d)$ exploits $\operatorname{Var}(X) = \mathbb{E}[X^2]-\mathbb{E}[X]^2$ with $X=Z_{i,t,\nu}^2$ and $Z_{i,t,\mu} \overset{\text{i.i.d.}}{\sim} \mathcal{N}(0,\sigma^2),\ \forall i \in [\![M_n(R)]\!].$
	\end{itemize}
	
	Next, we compute the variance of the decoding measure as follows
	\begin{align}
		\label{Ineq.Var_Decoding_Metric}
		\operatorname{Var}\!\left[d_F(\fY(i),\bar{\fc}_i)\right] &
		\stackrel{(a)}{=}
		\operatorname{Var}\!\left[
		n^{-1}\left\|\fY(i)-\fH\bar{\fc}_i\right\|_F^2
		\right]
		\nonumber\\&
		\stackrel{(b)}{=}
		\operatorname{Var}\!\left[
		\frac{1}{n}
		\sum_{t=1}^{n} \sum_{\nu=1}^{N_R}
		Z_{i,t,\nu}^2
		\right]
		\nonumber\\&
		\stackrel{(c)}{=}
		\frac{1}{n^2}
		\sum_{t=1}^{n} \sum_{\nu=1}^{N_R}
		\operatorname{Var}\!\left[Z_{i,t,\nu}^2\right]
		\nonumber\\&
		\stackrel{(d)}{=}
		\frac{1}{n^2}
		\sum_{t=1}^{n} \sum_{\nu=1}^{N_R}
		\left(3\sigma^4-\sigma^4\right)
		\nonumber\\&
		\stackrel{(e)}{=}
		\frac{1}{n^2}(nN_R)(2\sigma^4) = \frac{2N_R\sigma^4}{n} ,
	\end{align}
	where
	\begin{itemize}
		\item $(a)$ uses \eqref{Eq.Dec_Measure}.
		\item $(b)$ follows from the Frobenius norm expansion, i.e., $\|\fZ_i\|_F^2 = \sum_{t=1}^{n} \sum_{\nu=1}^{N_R} Z_{i,t,\nu}^2.$
		\item $(c)$ holds since the i.i.d. noise samples $\{Z_{i,t,\nu}\}_{\nu,t}$ imply independence of $\{Z_{i,t,\nu}^2\}_{\nu,t}.$
		\item $(d)$ exploits $\operatorname{Var}(X) = \mathbb{E}[X^2]-\mathbb{E}[X]^2$ with $X=Z_{i,t,\nu}^2$ and $Z_{i,t,\mu} \overset{\text{i.i.d.}}{\sim} \mathcal{N}(0,\sigma^2),\ \forall i \in [\![M_n(R)]\!].$
		\item $(e)$ holds since there are $nN_R$ terms in the double sum.
	\end{itemize}

	Therefore, combining \eqref{Ineq.Exp_Decoding_Metric} and \eqref{Ineq.Var_Decoding_Metric} gives
	\begin{align}
		\label{Ineq.TypeI_Cheb_2}
		P_{e,1}(i) \leq \Pr\big( \big| d_F(\fY(i),\bar{\fc}_i) - \mathbb{E} \big[ d_F(\fY(i),\bar{\fc}_i) \big] \big| \ge \delta_n \big)
		\le \frac{\text{Var}[d_F(\fY(i),\bar{\fc}_i)]}{\delta_n^2}
		\leq \frac{2N_R\sigma^4}{n \delta_n^2}
		\triangleq \eta_0 .
	\end{align}
	Hence, $P_{e,1}(i) \leq \eta_0 \leq e_1$ holds for sufficiently large $n$ and arbitrarily small $e_1 > 0.$

	\textbf{\textcolor{mycolor12}{Type II Error.}}
	We examine type II errors, i.e., when $\fY \in \mathbbmss{D}_j$ while the transmitter sent $\fc_i$ with $i \neq j \,.$ Then, for every $i,j \in [\![M_n(R)]\!],$ the type II error probability reads
	\begin{align}
		P_{e,2}(i,j) = \Pr \big( d_F( \fY(i),\bar{\fc}_j ) \leq \delta_n ).
		\label{Eq.Pe2G}
	\end{align}
	Next, we decompose the decoding measure in \eqref{Eq.Pe2G} as follows
	\begin{align}
		\label{Eq.d_F_Decompos}
		d_F( \fY(i),\bar{\fc}_j ) & = n^{-1} \big\| \fY(i) - \fH \bar{\fc}_j \big\|_{F}^2 - N_R \sigma^2
		\nonumber\\&
		= n^{-1} \big\| \fY(i) - \fH \bar{\fc}_i + \fH \bar{\fc}_i - \fH \bar{\fc}_j \big\|_{F}^2 - N_R \sigma^2
		\nonumber\\&
		= n^{-1} \big\| \fZ + \bar{\fd}_{i,j}^{\fH} \big\|_{F}^2 - N_R \sigma^2
		\nonumber\\&
		=  n^{-1} \big\| \fZ \big\|_F^2 + n^{-1} \big\| \bar{\fd}_{i,j}^{\fH} \big\|_{F}^2 + 2n^{-1} \big\langle \fZ,\, \bar{\fd}_{i,j}^{\fH} \big\rangle_F - N_R \sigma^2
		\nonumber\\&
		= U_{i,j} + V_{i,j} - N_R \sigma^2
		\nonumber\\&
		= W_{i,j} - N_R \sigma^2 .
	\end{align}
	Therefore,
	\begin{align}
		\label{Eq.typeII_decom}
		P_{e,2}(i,j) = \Pr \big( d_F( \fY(i),\bar{\fc}_j ) \leq \delta_n )
		= \Pr \big( W_{i,j} - N_R \sigma^2 \leq \delta_n \big) = \Pr(\mathbbmss{E}_2) .
	\end{align}
	Next, we proceed to show that the cross-term in \eqref{Eq.d_F_Decompos} has negligible probability. To this purpose, we first provide a useful lemma which characterize its distribution.
	\begin{customlemma}{5}[Noise-Channel-Distance Product]
	\label{Lem.Noise_Ch_Dis_Prod}
		
		The cross term $V_{i,j} \triangleq 2n^{-1} \big\langle \mathbf{Z},\,\mathbf{d}_{i,j}^{H} \big\rangle_F$ is distributed as
		\begin{equation}
			\label{eq:cross_term_distribution}
			V_{i,j} \sim \mathcal{N} \big( 0,\, 4n^{-2} \sigma^2 \| \bar{\fd}_{i,j}^{\fH} \|_F^2 \big).
		\end{equation}
	\end{customlemma}

	\begin{proof}
		The proof is provided in Appendix \ref{App.Noise_Ch_Dis_Prod}.
	\end{proof}

	Next, by the submultiplicativity of the Frobenius norm with respect to the induced spectral norm,
	\begin{align}
		\label{Eq.Fro_Nor_d}
		\hspace{-4mm} \big\| \bar{\fd}_{i,j}^{\fH} \big\|_F
		= \left\|\fH( \bar{\fc}_i - \bar{\fc}_j)\right\|_F
		\stackrel{(a)}{\leq}
		\|\fH\|_2\, \| \bar{\fc}_i - \bar{\fc}_j \|_F
		= \sigma_{1}(\fH)\, \| \bar{\fc}_i - \bar{\fc}_j \|_F
		\stackrel{(b)}{\leq}
		\sigma_{1}(\fH)
		\left(\|\bar{\fc}_i\|_F+\|\bar{\fc}_j\|_F\right) , \hspace{-3mm}
	\end{align}
	where
	\begin{itemize}
		\item $(a)$ uses the spectral-Frobenius norm inequality bound, i.e., $\|\fA\fB\|_F \leq \|\fA\|_2\,\|\fB\|_F$ for compatible matrices $\fA \in \mathbb{R}^{a \times b}$ and $\fB \in \mathbb{R}^{c \times d}.$
		\item $(b)$ follows from the triangle inequality for the Frobenius norm, i.e., $\| \bar{\fc}_i - \bar{\fc}_j \|_F \leq \| \bar{\fc}_i \|_F + \| \bar{\fc}_j \|_F .$
	\end{itemize}
	Furthermore, we observe that the entrywise peak constraint implies that $\forall k \in \{i,j\}.$
	\begin{align}
		\label{Eq.Fro_Nor_c}
		\|\bar{\fc}_k\|_F = \sqrt{ \sum_{t=1}^{n}\sum_{\mu=1}^{N_T} \left|\bar{c}_{k,t,\mu}\right|^2 }
		\leq \sqrt{ \sum_{t=1}^{n}\sum_{\mu=1}^{N_T} P_{\max}^2 }
		= P_{\max} \sqrt{N_Tn} .
	\end{align}
	Consequently, combining \eqref{Eq.Fro_Nor_d} and
	\eqref{Eq.Fro_Nor_c} yields
	\begin{align}
		\label{Eq.Forb_d_UB}
		\big\| \bar{\fd}_{i,j}^{\fH} \big\|_F^2 \leq 4 \sigma_{1}^2(\fH) P_{\max}^2 N_T n .
	\end{align}
	
	To apply a standard Gaussian tail bound, we first standardize $V_{i,j}$ by its standard deviation. In particular, if $$V_{i,j} \sim \mathcal{N} \big( 0,\, 4n^{-2} \sigma^2 \| \bar{\fd}_{i,j}^{\fH} \|_F^2 \big)$$ then $\operatorname{std}(V_{i,j}) = 2 \sigma n^{-1} \|\mathbf d_{i,j}^{H} \|_F.$ Defining
	\begin{equation}
		\label{Eq.Stand_V}
		\bar{V}_{i,j}
		\triangleq
		\frac{V_{i,j}}{\operatorname{std}(V_{i,j})},
	\end{equation}
	we have $\bar{V}_{i,j} \sim \mathcal{N}(0,1).$ This standardization allows us to express the tail probability of $V_{i,j}$ in terms of the standard Gaussian tail function $Q(\cdot)$, for which standard tail bounds are readily available.
	Now, recall the event $\mathbbmss{E}_0:$
	\begin{equation}
		\mathbbmss{E}_0 \triangleq \{ |V_{i,j}| > \delta_n \} = \big\{ \big| 2 n^{-1} \big\langle \fZ,\, \bar{\fd}_{i,j}^{\fH} \big\rangle_F \big| > \delta_n \big\}.
	\end{equation}
	Hence, employing \eqref{Eq.Stand_V} we obtain
	\begin{align}
		\Pr(\mathbbmss{E}_0) & = \Pr( |\operatorname{std}(V_{i,j}) \cdot \bar{V}_{i,j} | > \delta_n ) 
		\nonumber\\&
		= \Pr\Big( | \bar{V}_{i,j} | > \frac{\delta_n}{\operatorname{std}(V_{i,j})} \Big)
		\nonumber\\&
		= \Pr\bigg( | \bar{V}_{i,j} | > \frac{n \delta_n}{2 \sigma \|\mathbf d_{i,j}^{H} \|_F} \bigg) .
	\end{align}
	Next, since $\bar{V}_{i,j}\sim\mathcal{N}(0,1)$ and the standard normal distribution is symmetric, for $x>0$ it holds
	\begin{equation}
		\Pr\left(|\bar{V}_{i,j}|>x\right) = 2Q(x) .
	\end{equation}
	Thereby, using the standard Gaussian tail bound, valid for $x\geq 0$, $Q(x)\leq 0.5 e^{-x^2/2},$ we obtain
	\begin{align}
		\Pr(\mathbbmss{E}_0) \leq 2Q \bigg( \frac{n \delta_n}{2 \sigma \|\mathbf d_{i,j}^{H} \|_F} \bigg)
		\leq \exp \biggl( - \frac{1}{2} \bigg[ \frac{n \delta_n}{2 \sigma \|\mathbf d_{i,j}^{H} \|_F} \bigg]^2 \biggr)
		= \exp \bigg( - \frac{n^2 \delta_n^2}{8 \sigma^2\| \fd_{i,j}^{H} \|_F^2 } \bigg) .
	\end{align}
	Next, employing \eqref{Eq.Forb_d_UB} we obtain
	\begin{align}
		\label{Ineq.E_0}
		\Pr(\mathbbmss{E}_0) \leq \exp \bigg( - \frac{n^2 \delta_n^2}{8 \sigma^2\| \fd_{i,j}^{H} \|_F^2 } \bigg)
		\leq \exp \bigg( - \frac{n \delta_n^2}{32 \sigma^2 \sigma_{1}^2(\fH) P_{\max}^2 N_T} \bigg) \triangleq \zeta_0 .
	\end{align}

	Now, the complementary event $\mathbbmss{E}_0^c,$ gives
	\begin{align}
		\label{Eq.E_0_Comp}
		2 n^{-1} \big\langle \fZ,\, \bar{\fd}_{i,j}^{\fH} \big\rangle_F \geq - \delta_n .
	\end{align}
	Next, applying the law of total probability to the event $\mathbbmss{E}_2$ over $\mathbbmss{E}_0$ and its complement $\mathbbmss{E}_0^c,$ gives
	\begin{align}
		\Pr(\mathbbmss{E}_2) = \Pr(\mathbbmss{E}_2 \cap \mathbbmss{E}_0) + \Pr(\mathbbmss{E}_2 \cap \mathbbmss{E}_0^c)
		\stackrel{(a)}{\leq} \Pr(\mathbbmss{E}_0) + \Pr\left( \mathbbmss{E}_2 \cap\,{\mathbbmss{E}_0^c} \right)
		\stackrel{(b)}{\leq} \Pr\left( \mathbbmss{E}_0 \right) + \Pr\left( \mathbbmss{E}_1 \right), \hspace{-1mm}
		\label{Eq.TypeIIError-E_0+E_1} 
	\end{align}
	where $(a)$ uses $\mathbbmss{E}_2 \cap \mathbbmss{E}_0 \subset \mathbbmss{E}_0$ and $(b)$ holds by $\Pr(\mathbbmss{E}_2 \cap \mathbbmss{E}_0^c) \leq \Pr (\mathbbmss{E}_1 )$ which is proved in the following:
	\begin{align}
		\Pr(\mathbbmss{E}_2 \cap \mathbbmss{E}_0^c)
		& = \Pr\big( \big\{ W_{i,j} - N_R \sigma^2 \leq \delta_n \big\} \cap \big\{| V_{i,j} | \leq \delta_n \, \big\} \big)
		\nonumber\\&
		= \Pr \big( \big\{ U_{i,j} - N_R \sigma^2 \leq \delta_n - V_{i,j} \big\}
		\cap \big\{| V_{i,j} | \leq \delta_n \, \big\} \big)
		\nonumber\\&
		\overset{(a)}{\leq} \Pr \big( \big\{ U_{i,j} - N_R \sigma^2 \leq 2 \delta_n  \big\}  \big)
		\nonumber\\&
		= \Pr\left(\mathbbmss{E}_1 \right), \hspace{-1mm}
	\end{align}
	where $(a)$ holds since $\delta_n-V_{i,j} \leq 2\delta_n$ conditioned on $| V_{i,j} | \leq \delta_n.$
	
	We now proceed with bounding $\Pr\left(\mathbbmss{E}_1 \right)$ as follows. By codebook construction, every pair of distinct codewords is separated by at least $2\tau_n = 2\sqrt{n\epsilon_n}.$ Therefore, applying Lemma \ref{Lem.Min_dis_con_cb} implies
	\begin{align}
		\big\| \bar{\fd}_{i,j}^{\fH} \big\|_F
		= \|\fH(\bar{\fc}_i-\bar{\fc}_j)\|_F
		&\stackrel{\text{(a)}}{=}
		\|\fU_r\fSigma_r\fV_r^{\mathsf T}
		\fV_r(\fc_i-\fc_j)\|_F
		\nonumber\\
		&\stackrel{\text{(b)}}{=}
		\|\fU_r\fSigma_r(\fc_i-\fc_j)\|_F
		\nonumber\\
		&\stackrel{\text{(c)}}{=}
		\|\fSigma_r(\fc_i-\fc_j)\|_F
		\nonumber\\
		&\stackrel{\text{(d)}}{\geq}
		\sigma_r(\fH)\|\fc_i-\fc_j\|_F
		\nonumber\\
		&\stackrel{\text{(e)}}{\geq}
		2\sigma_r(\fH)\tau_n.
	\end{align}
	where,
	\begin{itemize}
		\item $(a)$ uses the reduced SVD
		$\fH=\fU_r\fSigma_r\fV_r^{\mathsf T}$ and use
		$\bar{\fc}_i=\fV_r\fc_i$ and
		$\bar{\fc}_j=\fV_r\fc_j$.
		\item $(b)$ exploits $\fV_r^{\mathsf T}\fV_r=\mathbf I_r$.
		\item $(c)$ holds since $\fU_r^{\mathsf T}\fU_r=\mathbf I_r$,
		left multiplication by $\fU_r$ preserves the Frobenius norm:
		\[
		\|\fU_r\fX\|_F^2
		=\operatorname{tr}\!\left(
		\fX^{\mathsf T}\fU_r^{\mathsf T}\fU_r\fX
		\right)
		=\|\fX\|_F^2.
		\]
		\item $(d)$ uses
		$\fSigma_r=\operatorname{diag}(\sigma_1,\ldots,\sigma_r)$
		and $\sigma_k\geq\sigma_r(\fH)$ for all $k$, we have
		$$
		\|\fSigma_r\fX\|_F^2
		=\sum_{k=1}^r\sigma_k^2\|\fX_{k,:}\|_2^2
		\geq
		\sigma_r^2(\fH)\|\fX\|_F^2.
		$$
		\item $(e)$ By the minimum-distance property of the codebook,
		$\|\fc_i-\fc_j\|_F\geq 2\tau_n$ for distinct $\fc_i$ and $\fc_j$.
	\end{itemize}
	Consequently,
	\begin{align}
		\label{Ineq.Radius_Event}
		\big\| \bar{\fd}_{i,j}^{\fH} \big\|_F^2
		&\geq 4\sigma_r^2(\fH)\tau_n^2.
	\end{align}
	
	Next, we obesrve that recalling \eqref{Eq.Epsilon_n} and the decoding threshold $\delta_n$ we obtain
	\begin{align}
		\label{Ineq.Radius_Event2}
		4 \sigma_r^2(\fH) \tau_n^2 = 4 \sigma_r^2(\fH) n \cdot \frac{a}{\sigma_r^2(\fH) n^{( 1 - b) / 2}}
		= \frac{4 n a}{n^{( 1 - b) / 2}} = 3 n \cdot \frac{4a}{3n^{(1 - b)/2}} = 3 n \delta_n .
	\end{align}
	Hence, merging \eqref{Ineq.Radius_Event} and \eqref{Ineq.Radius_Event2} we obtain
	\begin{align}
		\label{Ineq.Radius_Event3}
		- \big\| \bar{\fd}_{i,j}^{\fH} \big\|_F^2 \leq - 3 n \delta_n.
	\end{align}
	
	Therefore, merging \eqref{Eq.d_F_Decompos} and \eqref{Ineq.Radius_Event}, we can establish the following upper bound for the event $\mathbbmss{E}_1:$
	\begin{align}
		\label{Ineq.Event_E1}
		\Pr(\mathbbmss{E}_1)
		= \Pr \big( U_{i,j} - N_R \sigma^2 \leq 2\delta_n \big)
		& = \Pr \big( n^{-1} \big( \big\| \fZ \big\|_F^2
		+ \big\| \bar{\fd}_{i,j}^{\fH} \big\|_{F}^2 \big)
		- N_R \sigma^2 \leq 2\delta_n \big)
		\nonumber\\
		& = \Pr \big( n^{-1} \big\| \fZ \big\|_F^2
		- N_R \sigma^2
		\leq 2\delta_n - n^{-1}
		\big\| \bar{\fd}_{i,j}^{\fH} \big\|_{F}^2 \big)
		\nonumber\\
		& \stackrel{(a)}{\leq}
		\Pr \big( n^{-1} \big\| \fZ \big\|_F^2
		- N_R \sigma^2 \leq -\delta_n \big)
		\nonumber\\
		& \stackrel{(b)}{\leq}
		\Pr \big( \big| n^{-1} \big\| \fZ \big\|_F^2
		- N_R \sigma^2 \big| \geq \delta_n \big)
		\nonumber\\
		& \stackrel{(c)}{=}\Pr\big( \big| d_F(\fY(i),\bar{\fc}_i)
		- \mathbb{E}[d_F(\fY(i),\bar{\fc}_i)] \big|
		\geq \delta_n \big)
		\nonumber\\
		& \stackrel{(d)}{\leq}
		\frac{\operatorname{Var}[d_F(\fY(i),\bar{\fc}_i)]}{\delta_n^2}
		\nonumber\\
		& \stackrel{(e)}{=}
		\frac{\operatorname{Var}[n^{-1}\|\fZ\|_F^2]}{\delta_n^2}
		\stackrel{(f)}{\leq} \frac{2N_R\sigma^4}{n\delta_n^2}
		\triangleq \zeta_1 .
	\end{align}
	where
	\begin{itemize}
		\item $(a)$ follows from \eqref{Ineq.Radius_Event3}, which gives $-\|\bar{\fd}_{i,j}^{\fH}\|_F^2\leq-3n\delta_n$ and hence $2\delta_n-n^{-1}\|\bar{\fd}_{i,j}^{\fH}\|_F^2\leq-\delta_n$.
		\item $(b)$ follows from the event inclusion $\{X\leq-\delta_n\} \subseteq \{|X|\geq\delta_n\},$
		where $X\triangleq n^{-1}\|\fZ\|_F^2-N_R\sigma^2.$
		\item $(c)$ follows from the decoding metric definition and $\fY(i)=\mathbf H\bar{\fc}_i+\fZ$, which imply $d_F(\fY(i),\bar{\fc}_i)=n^{-1}\|\fZ\|_F^2-N_R\sigma^2$ and $\mathbb{E}[d_F(\fY(i),\bar{\fc}_i)]=0$.
		\item $(d)$ follows from the Chebyshev's inequality, which gives
		\[
		\Pr\big(
		|X-\mathbb{E}[X]|\geq\delta_n
		\big)
		\leq
		\frac{\operatorname{Var}[X]}{\delta_n^2},
		\qquad \delta_n>0.
		\]
		\item $(e)$ follows from the invariance of variance to constant shifts. Thus, $\operatorname{Var}[d_F(\fY(i),\bar{\fc}_i)] = \operatorname{Var}[n^{-1}\|\fZ\|_F^2]$.
		\item $(f)$ exploits $\operatorname{Var}[n^{-1}\|\fZ\|_F^2] = (2N_R\sigma^4)/n,$ cf. \eqref{Ineq.Var_Decoding_Metric}.
	\end{itemize}

	Therefore, employing the upper bounds given in \eqref{Ineq.E_0}, \eqref{Ineq.Event_E1} and \eqref{Eq.TypeIIError-E_0+E_1} yields
	\begin{align}
		P_{e,2}(i,j) \leq \Pr(\mathbbmss{E}_0) + \Pr(\mathbbmss{E}_1) \leq \zeta_0 + \zeta_1 \leq e_2,
	\end{align}
	hence, $P_{e,2}(i,j) \leq e_2$ holds for sufficiently large $n$ and arbitrarily small $e_2 > 0.$ We have thus shown that for every $e_1,e_2 > 0$ and sufficiently large $n$, there exists an $(n, M_n(R), e_1, e_2)$-DI code. This completes the achievability proof of Theorem~\ref{Th.MIMO-Capacity}.

	\section{Conclusions and Perspectives}
	\label{Sec.Conclusion}
	
	In this work, we provide a rigorous characterization of the identification
	problem over MIMO Gaussian channels, extending the classical SISO setting
	\cite{Salariseddigh_IT} to the multi-antenna case. We establish the achievability of identification codes with super-exponential codebook size \(M_n(R)=2^{(n\log n)R}\) and derive explicit lower and upper bounds on the achievable identification rate \(R\). Our results show that these bounds scale linearly with the MIMO channel rank \(r\). In particular, the achievable identification rate grows at least linearly with \(r\), enabling stronger super-exponential codebook growth as the channel rank increases. This characterizes the rank-dependent gain in achievable identification performance without requiring an exact multiplicative relationship between codebook sizes. These results characterize the fundamental identification capability of MIMO Gaussian channels and provide a basis for
	further investigation of multi-user and fading extensions. Nevertheless, developing a unified and fully general theory of DI for MIMO Gaussian channels, including non-asymptotic performance and imperfect channel knowledge, remains an important direction for future research. Further extensions to multi-user settings are of particular interest, including the characterization of identification rates for MIMO multiple-access channels in the uplink and MIMO broadcast channels in the downlink under various input constraints, such as average-power and peak-amplitude constraints. More broadly, extending the diversity–multiplexing framework of Zheng and Tse \cite{ZhengTse2003} to identification systems could shed light on the fundamental tradeoffs among identification performance, diversity, and multiplexing gains, and thereby establish new fundamental limits for MIMO identification.

	\subsection{Open Problems and Conjectures}
	Analytical evidence and derivations reported in \cite{Salari_ISI,Salari_Colored} suggest the following conjecture.
	\begin{conjecture}
		
		Let $\mathcal{G}$ denote a MIMO Gaussian channel with ISI and spatio-temporally correlated noise. We conjecture that the codebook-size scaling law $M_n(R)=2^{(n\log n)R}$ \emph{survives} in the presence of ISI and correlated Gaussian noise, provided that the corresponding channel spectrum is bounded away from zero. In particular, for noise covariance is given by the Kronecker prodcut $\mathbf{\Sigma}_z=\mathbf{R}_t\otimes\mathbf{R}_s,$ where $\mathbf{R}_s$ and $\mathbf{R}_t$ are the spatial and temporal covariance matrices, respectively, the same scaling is expected when decoding is performed using the corresponding generalized Mahalanobis metric.
		\[
		D_M(\mathbf{Y},\mathbf{H}\mathbf{X})
		\triangleq
		\Big\|
		\mathbf{R}_s^{-1/2}
		(\mathbf{Y}-\mathbf{H}\mathbf{X})
		\mathbf{R}_t^{-1/2}
		\Big\|_F.
		\]
	\end{conjecture}

	\section{Acknowledgments}
		
	The authors acknowledge Professor Dr. -Ing. Dr. rer. nat. Holger Boche for helpful discussions. Christian Deppe was supported by the BMFTR under the research program Communication Systems - “Souverän. Digital. Vernetzt.” through the joint project 6G-life (Project ID: 16KIS2415) and the project QSTARS (Project ID: 16KIS2196).

	\appendices

	\section{Proof of Lemma~\ref{Lem.Converse2}; Minimum Frobenius Distance}
	\label{App.Converse_Proof2}
	
	The proof relies on a key property of reliable identification codes: whenever the sum of the type-I and type-II error probabilities is strictly less than one, the output distributions induced by any two distinct codewords have total variation distance bounded away from zero. By the classical Fuchs--van de Graaf inequality, this yields a corresponding fidelity bound, which, together with the Gaussian fidelity expression, implies a lower bound on the Frobenius norm of the codewords difference.
	
	\begin{proof}
		Recall that in the converse a sequence of identification codes $\bigl(n, M(n,R), e_1^{(n)}, e_2^{(n)}\bigr),$ with codewords $\{\mathbf c_i\}_{i=1}^{M(n,R)}$ and decoding regions $\{\mathbbmss D_i\}_{i=1}^{M(n,R)}$. Fix any two distinct messages $i_1 \neq i_2$. We establish a lower bound on the total variation distance between $f_{\fZ}(\fy|\bar{\fc}_{i_1})$ and $f_{\fZ}(\fy|\bar{\fc}_{i_2}),$ i.e., $d_{\rm TV}(f_{\fZ}(\fy|\bar{\fc}_{i_1}) - f_{\fZ}(\fy|\bar{\fc}_{i_2}) ).$ Observe that induced by the definition of type I and type II errors,
		\begin{align}
			P_{e,1}(i_1) & = 1 -	\int_{\mathbbmss D_{i_1}}
			f_{\mathbf Z}(\mathbf y-\mathbf \fH \bar{\fc}_{i_1})\, d\mathbf y \le e_1^{(n)} \le e_1 .
			\nonumber\\
			P_{e,2}(i_2,i_1) & = \int_{\mathbbmss D_{i_1}}
			f_{\mathbf Z}(\mathbf y-\mathbf \fH \bar{\fc}_{i_2})\, d\mathbf y \le e_2^{(n)} \le e_2 .
		\end{align}
		Thereby, subtracting the output distributions over the decoding region $\mathbbmss D_{i_1}$ we obtain
		\begin{align}
			\label{Eq.Intg_Diff_Output}
			\int_{\mathbbmss D_{i_1}} \hspace{-2.7mm}
			\Bigl(
			f_{\mathbf Z}(\mathbf y- \fH \bar{\fc}_{i_1})
			-
			f_{\mathbf Z}(\mathbf y- \fH \bar{\fc}_{i_2})
			\Bigr)
			\, d\mathbf y
			\ge
			1-e_1^{(n)}-e_2^{(n)}.
		\end{align}
		Next, exploiting the supremum definition of total variation distance, we obtain
		\begin{align}
			d_{\rm TV}(f_{\fZ}(\fy|\bar{\fc}_{i_1}) - f_{\fZ}(\fy|\bar{\fc}_{i_2}) )
			& =
			\sup_{\mathcal A}
			\left|
			\int_{\mathcal A} (f_{\fZ}(\fy|\bar{\fc}_{i_1}) - f_{\fZ}(\fy|\bar{\fc}_{i_2}) \, d\mathbf y \right|
			\nonumber\\&
			\ge \bigg| \int_{\mathbbmss D_{i_1}}
			\Bigl( f_{\mathbf Z}(\mathbf y-\bar{\fc}_{i_1}) - f_{\mathbf Z}(\mathbf y-\bar{\fc}_{i_2})
			\Bigr) \, d\mathbf y \bigg| .
		\end{align}
		where the supremum is over all measurable sets $\mathcal A$ and $(b)$ holds by choosing the particular measurable set $\mathcal A=\mathbbmss D_{i_1}.$ Combining with \eqref{Eq.Intg_Diff_Output} yields
		\begin{align}
			\label{Ineq.LB_TV_D}
			d_{\rm TV}(f_{\fZ}(\fy|\bar{\fc}_{i_1}) - f_{\fZ}(\fy|\bar{\fc}_{i_2}) ) \ge 1 - e_1^{(n)} - e_2^{(n)}
			> 1 - e_1 - e_2 \triangleq e_0 .
		\end{align}
		Next, we leverage a refined approach, originating from quantum information theory, named, classical \emph{Fuchs–van de Graaf inequality} \cite{Wilde13} specialized to classical probability measures inequalities, i.e.,
		\begin{align}
			\label{Ineq.Fuchs_van_de_Graaf}
			1 - F(f_{\fZ}(\fy|\bar{\fc}_{i_1}),f_{\fZ}(\fy|\bar{\fc}_{i_2}))
			\le
			d_{\rm TV}(f_{\fZ}(\fy|\bar{\fc}_{i_1}),f_{\fZ}(\fy|\bar{\fc}_{i_2}))
			\le
			\sqrt{1 - F^2(f_{\fZ}(\fy|\bar{\fc}_{i_1}),f_{\fZ}(\fy|\bar{\fc}_{i_2}))} .
		\end{align}
		Now, merging \eqref{Ineq.LB_TV_D} and the right hand side of \eqref{Ineq.Fuchs_van_de_Graaf} gives
		\begin{align}
			F^2(f_{\fZ}(\fy|\bar{\fc}_{i_1}),f_{\fZ}(\fy|\bar{\fc}_{i_2})) \le 1 - e_0^2 \triangleq e_0' .
		\end{align}
		Taking the natural logarithm of both sides and then negating the result yields
		\begin{align}
			- \ln (e_0') & \le - \ln F^2(f_{\fZ}(\fy|\bar{\fc}_{i_1}),f_{\fZ}(\fy|\bar{\fc}_{i_2}))
			\nonumber\\&
			\stackrel{(a)}{\le} \frac{\|\fH(\bar{\fc}_{i_1}-\bar{\fc}_{i_2})\|_F^2}{4\sigma^2}
			\nonumber\\
			& \stackrel{(b)}{\leq}
			\frac{\| \fH \|_2^2 \cdot \| \bar{\fc}_{i_1}-\bar{\fc}_{i_2} \|_F^2}{4\sigma^2}
			\nonumber\\
			& \stackrel{(c)}{=}
			\frac{\sigma_{1}^2(\fH) \cdot \| \bar{\fc}_{i_1}-\bar{\fc}_{i_2} \|_F^2}{4\sigma^2} ,
		\end{align}
		where $(a)$ holds, as the fidelity between the two output distributions of $\bG$ is derived in Appendix~\ref{App.Fid_Gaussian}.
		\begin{align}
			F\!\left(
			f_{\fZ}(\fy|\bar{\fc}_{i_1}),
			f_{\fZ}(\fy|\bar{\fc}_{i_2})
			\right)
			=
			\exp\!\left(
			- \frac{\|\fH(\bar{\fc}_{i_1}-\bar{\fc}_{i_2})\|_F^2}{8\sigma^2}
			\right) ,
		\end{align}
		and $(b)$ exploits the spectral–Frobenius norm inequality, i.e., $\|\fA\fB\|_F \leq \|\fA\|_2 \|\fB\|_F.$ Thereby,
		\begin{align}
			\| \bar{\fc}_{i_1}-\bar{\fc}_{i_2} \|_F^2 \ge - \frac{4 \sigma^2 \ln (e_0')}{\sigma_{1}^2(\fH)} ,	
		\end{align}
		which implies
		\begin{equation}
			\| \bar{\fc}_{i_1}-\bar{\fc}_{i_2} \|_F \ge \frac{2 \sigma \sqrt{\ln (1/e_0')}}{\sigma_{1}(\fH)} .
		\end{equation}
		Next, we observe that multiplication by $\fV_r$ preserves the Frobenius norm, namely,
		\begin{align}
			\| \bar{\fc}_{i_1}-\bar{\fc}_{i_2} \|_F
			& \stackrel{\text{(a)}}{=}
			\|\fV_r(\fc_i-\fc_j)\|_F
			\nonumber\\
			& \stackrel{\text{(b)}}{=}
			\sqrt{\operatorname{tr}\big((\fV_r(\fc_i-\fc_j))^T
				\fV_r(\fc_i-\fc_j) \big)}
			\nonumber\\
			& =
			\sqrt{\operatorname{tr}\big((\fc_i-\fc_j)^T
				\fV_r^T \fV_r(\fc_i-\fc_j) \big)}
			\nonumber\\
			& \stackrel{\text{(c)}}{=}
			\sqrt{\operatorname{tr}\big((\fc_i-\fc_j)^T
				(\fc_i-\fc_j) \big)}
			\nonumber\\
			& = \| \fc_i - \fc_j \|_F ,
		\end{align}
		where
		\begin{itemize}
			\item $(a)$ uses the definition of the reduced codewords in \eqref{Eq.Precoder}.
			\item $(b)$ exploits the definition of the Frobenius norm in terms of the trace, i.e. $\|\fA\|_F^2 = \operatorname{tr}(\fA^{\mathsf T}\fA).$
			\item $(c)$ follows from the orthonormality of the columns of $\fV_r$, which gives
			$\fV_r^{\mathsf T}\fV_r=\fI_r$.
		\end{itemize}
		Thereby,
		\begin{align}
			\| \fc_{i_1} - \fc_{i_2} \|_F^2 \ge \frac{2 \sigma \sqrt{\ln (1/e_0')}}{\sigma_{1}(\fH)} .
		\end{align}
		This compelte the proof fo Lemma~\ref{Lem.Converse2}.
	\end{proof}

	\section{Fidelity Between Two Gaussian Distributions}
	
	\label{App.Fid_Gaussian}
	
	In the following, we derive the fidelity between the two Gaussian output
	distributions induced by the distinct codewords $\fc_{i_1}$ and $\fc_{i_2}$.
	For the channel model under consideration, the conditional output density
	given a transmitted codeword $\fx$ and channel matrix $\fH$ is given by \eqref{Eq.Noise_Den_Matr}
	\begin{align}
		\label{Eq:Conditional_Output_Density}
		f_{\fY|\fX}(\fy|\fx)
		& =
		\prod_{t=1}^{n}
		f_{\fY_t|\fX_t}(\fy_t|\fx_t)
		\nonumber\\&
		=
		\frac{1}{\sqrt{(2\pi\sigma^2)^{N_Rn}}}
		\exp\!\left(
		-\frac{\|\fy-\fH\fx\|_F^2}{2\sigma^2}
		\right).
	\end{align}
	For notational convenience, let
	\begin{equation}
		\label{Eq:Output_Densities_Definition}
		f_{\fZ}(\fy|\bar{\fc}_i)
		\triangleq
		f_{\fY|\fX}(\fy|\bar{\fc}_i).
	\end{equation}
	Thus, for the two transmitted codewords $\bar{\fc}_{i_1}$ and $\bar{\fc}_{i_2}$,
	\begin{equation}
		\label{Eq:Conditional_Density_1}
		f_{\fZ}(\fy|\fc_{i_1})
		=
		\frac{1}{\sqrt{(2\pi\sigma^2)^{N_Rn}}}
		\exp\!\left(
		-\frac{\|\fy-\fH\bar{\fc}_{i_1}\|_F^2}{2\sigma^2}
		\right),
	\end{equation}
	and
	\begin{equation}
		\label{Eq:Conditional_Density_2}
		f_{\fZ}(\fy|\bar{\fc}_{i_2})
		=
		\frac{1}{\sqrt{(2\pi\sigma^2)^{N_Rn}}}
		\exp\!\left(
		-\frac{\|\fy-\fH\bar{\fc}_{i_2}\|_F^2}{2\sigma^2}
		\right).
	\end{equation}
	The fidelity between these two output distributions reads
	\begin{equation}
		\label{Eq:Fidelity_Definition}
		F\!\left(
		f_{\fZ}(\fy|\bar{\fc}_{i_1}),
		f_{\fZ}(\fy|\bar{\fc}_{i_2})
		\right)
		=
		\int_{\mathbb{R}^{N_R \times n}} \hspace{-4mm}
		\sqrt{
			f_{\fZ}(\fy|\bar{\fc}_{i_1})
			f_{\fZ}(\fy|\bar{\fc}_{i_2})
		}
		\,d\fy.
	\end{equation}
	Substituting \eqref{Eq:Conditional_Density_1} and
	\eqref{Eq:Conditional_Density_2} into \eqref{Eq:Fidelity_Definition}
	gives
	\begin{equation}
		\label{Eq:Fidelity_Geometric_Mean}
		\sqrt{
			f_{\fZ}(\fy|\bar{\fc}_{i_1})
			f_{\fZ}(\fy|\bar{\fc}_{i_2})
		}
		=
		\frac{1}{\sqrt{(2\pi\sigma^2)^{N_R n}}}
		\exp\!\left(
		A_{i_1,i_2}(\fy)
		\right),
	\end{equation}
	where
	\begin{equation}
		\label{Eq:A_Definition}
		A_{i_1,i_2}(\fy)
		\triangleq
		-\frac{
			\|\fy-\fH\bar{\fc}_{i_1}\|_F^2
			+
			\|\fy-\fH\bar{\fc}_{i_2}\|_F^2
		}{4\sigma^2}.
	\end{equation}
	
	We now simplify the exponent. For any two real-valued vectors $\fa$ and
	$\fb$, we use the squared Euclidean norm identity
	\begin{equation}
		\label{Eq:Squared_Norm_Identity}
		\|\fa-\fb\|_F^2
		\triangleq
		(\fa-\fb)^T(\fa-\fb)
		=
		\fa^T\fa
		-
		2\fa^T\fb
		+
		\fb^T\fb.
	\end{equation}
	Applying \eqref{Eq:Squared_Norm_Identity} to the two terms appearing in
	\eqref{Eq:A_Definition} yields
	\begin{equation}
		\label{Eq:Norm_Expansion_1}
		\begin{aligned}
			\|\fy-\fH\bar{\fc}_{i_1}\|_F^2
			&=
			(\fy-\fH\bar{\fc}_{i_1})^T
			(\fy-\fH\bar{\fc}_{i_1})
			\\
			&=
			\fy^T\fy
			-
			2(\fH\bar{\fc}_{i_1})^T\fy
			+
			(\fH\bar{\fc}_{i_1})^T(\fH\bar{\fc}_{i_1}),
		\end{aligned}
	\end{equation}
	and
	\begin{equation}
		\label{Eq:Norm_Expansion_2}
		\begin{aligned}
			\|\fy-\fH\fc_{i_2}\|_F^2
			&=
			(\fy-\fH\bar{\fc}_{i_2})^T
			(\fy-\fH\bar{\fc}_{i_2})
			\\
			&=
			\fy^T\fy
			-
			2(\fH\bar{\fc}_{i_2})^T\fy
			+
			(\fH\bar{\fc}_{i_2})^T(\fH\bar{\fc}_{i_2}).
		\end{aligned}
	\end{equation}
	Adding \eqref{Eq:Norm_Expansion_1} and
	\eqref{Eq:Norm_Expansion_2} gives
	\begin{equation}
		\label{Eq:A_Expanded}
		\begin{aligned}
			-4\sigma^2 A_{i_1,i_2}(\fy)
			=
			2\fy^T\fy
			-
			2\big(
			\fH\bar{\fc}_{i_1}+\fH\bar{\fc}_{i_2}
			\big)^T\fy
			+
			\|\fH\bar{\fc}_{i_1}\|_F^2
			+
			\|\fH\bar{\fc}_{i_2}\|_F^2.
		\end{aligned}
	\end{equation}
	To complete the square, define the midpoint of the two received
	codewords by
	\begin{equation}
		\label{Eq:Received_Midpoint}
		\bar{\fm}_{i_1,i_2}^{\fH}
		\triangleq
		\frac{
			\fH\bar{\fc}_{ i_1}+\fH\bar{\fc}_{i_2}
		}{2}.
	\end{equation}
	Then
	\begin{equation}
		\label{Eq:Square_Completion}
		\begin{aligned}
			2\|\fy-\bar{\fm}_{i_1,i_2}^{\fH}\|_F^2
			=
			2\fy^T\fy
			-
			2\big(
			\fH\bar{\fc}v_{i_1}+\fH\bar{\fc}_{i_2}
			\big)^T\fy
			+
			2\|\bar{\fm}_{i_1,i_2}^{\fH}\|_F^2.
		\end{aligned}
	\end{equation}
	Consequently, \eqref{Eq:A_Expanded} becomes
	\begin{equation}
		\label{Eq:A_Completed_Square}
		\begin{aligned}
			-4\sigma^2 A_{i_1,i_2}(\fy)
			= 2\|\fy-\bar{\fm}_{i_1,i_2}^{\fH}\|_F^2
			+
			\|\fH\bar{\fc}_{i_1}\|_F^2
			+
			\|\fH\bar{\fc}_{i_2}\|_F^2
			-
			2\|\bar{\fm}_{i_1,i_2}^{\fH}\|_F^2.
		\end{aligned}
	\end{equation}
	It remains to simplify the term independent of $\fy$. From
	\eqref{Eq:Received_Midpoint},
	\begin{equation}
		\label{Eq:Midpoint_Constant}
		\begin{aligned}
			\|\fH\bar{\fc}_{i_1}\|_F^2
			+
			\|\fH\bar{\fc}_{i_2}\|_F^2
			-
			2\|\bar{\fm}_{i_1,i_2}^{\fH}\|_F^2
			&
			=
			\frac{1}{2}
			\left(
			\|\fH\bar{\fc}_{i_1}\|_F^2
			-
			2(\fH\bar{\fc}_{i_1})^T(\fH\bar{\fc}_{i_2})
			+
			\|\fH\bar{\fc}_{i_2}\|_F^2
			\right)
			\nonumber\\&
			= \frac{1}{2}
			\|\fH\bar{\fc}_{i_1}-\fH\bar{\fc}_{i_2}\|_F^2
			\nonumber\\&
			= \frac{1}{2}
			\|\fH(\bar{\fc}_{i_1}-\bar{\fc}_{i_2})\|_F^2.
		\end{aligned}
	\end{equation}
	Substituting \eqref{Eq:Midpoint_Constant} into
	\eqref{Eq:A_Completed_Square} and dividing by $-4\sigma^2$ gives the
	desired decomposition of the exponent:
	\begin{equation}
		\label{Eq:A_Final}
		A_{i_1,i_2}(\fy)
		=
		-\frac{
			\|\fy-\bar{\fm}_{i_1,i_2}^{\fH}\|_F^2
		}{2\sigma^2}
		-
		\frac{
			\|\fH(\bar{\fc}_{i_1}-\bar{\fc}_{i_2})\|_F^2
		}{8\sigma^2}.
	\end{equation}
	Substituting \eqref{Eq:A_Final} into
	\eqref{Eq:Fidelity_Geometric_Mean} and then into
	\eqref{Eq:Fidelity_Definition}, the fidelity separates into a
	codeword-dependent factor and a normalized Gaussian integral:
	\begin{equation}
		\label{Eq:Fidelity_Separated}
		\begin{aligned}
			& F\!\left(
			f_{\fZ}(\fy|\bar{\fc}_{i_1}),
			f_{\fZ}(\fy|\bar{\fc}_{i_2})
			\right)
			=
			\exp\!\left(
			-\frac{
				\|\fH(\bar{\fc}_{i_1}-\bar{\fc}_{i_2})\|_F^2
			}{8\sigma^2}
			\right)
			\\
			& \times
			\int_{\mathbb{R}^{N_R \times n}}
			\frac{1}{\sqrt{(2\pi\sigma^2)^{N_Rn}}}
			\exp\!\left(
			-\frac{
				\|\fy-\bar{\fm}_{i_1,i_2}^{\fH}\|_F^2
			}{2\sigma^2}
			\right)
			\,d\fy.
		\end{aligned}
	\end{equation}
	The integral in \eqref{Eq:Fidelity_Separated} is the integral of a
	normalized Gaussian density over its entire support and therefore equals
	one:
	\begin{equation}
		\label{Eq:Normalized_Gaussian_Integral}
		\int_{\mathbb{R}^{N_R \times n}}
		\frac{1}{\sqrt{(2\pi\sigma^2)^{N_Rn}}}
		\exp\!\left(
		-\frac{
			\|\fy-\bar{\fm}_{i_1,i_2}^{\fH}\|_F^2
		}{2\sigma^2}
		\right)
		\,d\fy
		=
		1.
	\end{equation}
	Hence, the fidelity between the two conditional output distributions is
	\begin{equation}
		\label{Eq:MIMO_Gaussian_Fidelity}
		F\!\left(
		f_{\fZ}(\fy|\bar{\fc}_{i_1}),
		f_{\fZ}(\fy|\bar{\fc}_{i_2})
		\right)
		=
		\exp\!\left(
		-\frac{
			\|\fH(\bar{\fc}_{i_1}-\bar{\fc}_{i_2})\|_F^2
		}{8\sigma^2}
		\right).
	\end{equation}
	Thus, the fidelity depends only on the squared Euclidean distance between the two codewords after propagation through the channel matrix $\fH$.

	\section{Proof of Lemma \ref{Lem.Min_dis_con_cb}; Channel-Domain Distance}
	\label{App.Min_dis_con_cb}
	
	\begin{proof}
		Recall that
		\begin{equation}
			\fH=\fU_r\fSigma_r\fV_r^{\mathsf T},
			\qquad
			\fU_r^{\mathsf T}\fU_r=\mathbf I_r,
			\qquad
			\fV_r^{\mathsf T}\fV_r=\mathbf I_r,
		\end{equation}
		where
		\[
		\fSigma_r=\operatorname{diag}(\sigma_1,\ldots,\sigma_r),
		\qquad
		\sigma_1\geq\cdots\geq\sigma_r>0.
		\]
		Since $\bar{\fc}_k=\fV_r\fc_k$, it holds $\bar{\fc}_i-\bar{\fc}_j=\fV_r(\fc_i-\fc_j).$
		Now, we observe that
		\begin{align}
			\|\fH(\bar{\fc}_i-\bar{\fc}_j)\|_F
			\nonumber&
			\stackrel{\text{(a)}}{=}
			\|\fU_r\fSigma_r\fV_r^{\mathsf T}
			\fV_r(\fc_i-\fc_j)\|_F \\
			\nonumber&
			\stackrel{\text{(b)}}{=}
			\|\fU_r\fSigma_r(\fc_i-\fc_j)\|_F \\
			\nonumber&
			\stackrel{\text{(c)}}{=}
			\|\fSigma_r(\fc_i-\fc_j)\|_F \\
			&\stackrel{\text{(d)}}{\geq}
			\sigma_r(\fH)\|\fc_i-\fc_j\|_F ,
		\end{align}
		where
		\begin{itemize}
			\item $(a)$ substitute the SVD $\fH=\fU_r\fSigma_r\fV_r^{\mathsf T}$.
			\item $(b)$ use $\fV_r^{\mathsf T}\fV_r=\mathbf I_r$.
			\item $(c)$ since $\fU_r^{\mathsf T}\fU_r=\mathbf I_r$, left multiplication by $\fU_r$
			preserves the Frobenius norm:
			\[
			\|\fU_r\fX\|_F^2
			=\operatorname{tr}(\fX^{\mathsf T}\fU_r^{\mathsf T}\fU_r\fX)
			=\|\fX\|_F^2.
			\]
			\item $(d)$ since $\fSigma_r=\operatorname{diag}(\sigma_1,\ldots,\sigma_r)$ with
			$\sigma_k\geq\sigma_r$,
			\[
			\|\fSigma_r\fX\|_F^2
			=\sum_{k=1}^r\sigma_k^2\|\fX_{k,:}\|_2^2
			\geq
			\sigma_r^2\|\fX\|_F^2.
			\]
		\end{itemize}
		
	\end{proof}

	\section{Proof of Lemma \ref{Lem.Vol_Preserv}; Volume Preservation under Vectorization}
	
	\label{App.Vol_Preserv}

	First, we verify that the vectorization map is linear. For
	$\fX,\fY\in\mathbb{R}^{r\times n}$ and $\alpha,\beta\in\mathbb{R}$,
	the column-wise definition of $\operatorname{vec}$ gives
	\begin{equation}
		\operatorname{vec}(\alpha\fX+\beta\fY)
		=
		\alpha\operatorname{vec}(\fX)
		+
		\beta\operatorname{vec}(\fY).
	\end{equation}
	Hence, $\operatorname{vec}$ is linear. Next, for $\fX,\fY\in\mathbb{R}^{r\times n}$,
	\begin{align}
		\left\langle
		\operatorname{vec}(\fX),
		\operatorname{vec}(\fY)
		\right\rangle_2
		=
		\sum_{j=1}^{n}\sum_{i=1}^{r}x_{ij}y_{ij} =
		\langle\fX,\fY\rangle_F.
	\end{align}
	Thus, $\operatorname{vec}$ preserves the inner product. In particular,
	by taking $\fY=\fX$,
	\begin{equation}
		\|\operatorname{vec}(\fX)\|_2 = \|\fX\|_F.
	\end{equation}	
	Moreover, by linearity,
	\begin{equation}
		\operatorname{vec}(\fX)-\operatorname{vec}(\fY) =
		\operatorname{vec}(\fX-\fY),
	\end{equation}
	and therefore
	\begin{align}
		\|\operatorname{vec}(\fX)-\operatorname{vec}(\fY)\|_2
		=
		\|\operatorname{vec}(\fX-\fY)\|_2 =
		\|\fX-\fY\|_F.
	\end{align}
	Hence, $\operatorname{vec}$ is a linear isometry from
	$\bigl(\mathbb{R}^{r\times n},\|\cdot\|_F\bigr)$ onto
	$\bigl(\mathbb{R}^{rn},\|\cdot\|_2\bigr)$. Moreover, with respect to the column-wise Frobenius-orthonormal basis
	
	\begin{equation}
		\operatorname{vec}(E)
		:=
		(E_{1j},\ldots,E_{rj})_{j=1}^{n}
	\end{equation}
	of $\mathbb{R}^{r\times n}$ and the standard basis of
	$\mathbb{R}^{rn}$, the vectorization map satisfies
	\begin{equation}
		\operatorname{vec}(E_{ij})
		=
		e_{(j-1)r+i}.
	\end{equation}
	Hence, its matrix representation is the identity matrix
	$I_{rn}$. Since $\operatorname{vec}$ is linear,
	\begin{equation}
		D\operatorname{vec}=I_{rn},
		\qquad
		\left|\det D\operatorname{vec}\right|=1.
	\end{equation}
	Therefore, by the change-of-variables formula, applied to
	$T=\operatorname{vec}$ and the function $g\equiv1$,
	\begin{align}
		\operatorname{Vol}_{2}\!\left(\operatorname{vec}(\mathcal A)\right)
		& =
		\int_{\operatorname{vec}(\mathcal A)}1\,d\fx
		\nonumber\\
		& =
		\int_{\mathcal A}
		\left|\det D\operatorname{vec}(\fX)\right|
		\,d\operatorname{Vol}_{F}(\fX)
		\nonumber\\
		& =
		\int_{\mathcal A}1\,d\operatorname{Vol}_{F}(\fX)
		\nonumber\\
		& =
		\operatorname{Vol}_{F}(\mathcal A).
	\end{align}
	
	Thus,
	\begin{equation}
		\operatorname{Vol}_{F}(\mathcal A)
		=
		\operatorname{Vol}_{2}\!\left(\operatorname{vec}(\mathcal A)\right).
	\end{equation}

	\section{Proof of Lemma \ref{Lem.Neg_Log_Vol}; Negative Log-Volume of Hypersphere}
	\label{App.Neg_Log_Vol}

	\begin{proof}
		Recall that
		\[
		\operatorname{Vol}\big(\S_{\fc_1}(n,r_0)\big)
		=
		\frac{(r_0\sqrt{\pi})^n}{\Gamma(1 + \frac n2)}.
		\]
		By Stirling's expansion \cite{Olver2010} for the Gamma function,
		\begin{align}
			\log\Gamma(x+1)
			&
			=
			\left(x+\frac12\right)\log x-x
			+\frac12\log(2\pi)
			+\frac{1}{12x}
			+O(x^{-3}),
		\end{align}
		for $x\to\infty.$ Setting \(x=n/2\), we obtain
		\[
		\log\Gamma\left(\frac n2+1\right)
		=
		\frac n2\log\frac n2-\frac n2
		+\frac12\log(\pi n)
		+\frac{1}{6n}
		+O(n^{-3}).
		\]
		Consequently,
		\[
		\begin{aligned}
			\log\operatorname{Vol}\big(\S_{\fc_1}(n,r_0)\big)
			&
			=
			n\log(r_0\sqrt{\pi})
			-\log\Gamma\left(\frac n2+1\right)\\
			&=
			n\log(r_0\sqrt{\pi})
			\hspace{-.5mm}-\frac n2\log\frac n2
			\hspace{-.5mm}+\frac n2
			\hspace{-.5mm}-\frac12\log(\pi n)
			\hspace{-.5mm}-\frac{1}{6n}
			\hspace{-.5mm}+O(n^{-3})\\
			& =
			\frac n2\log\left(\frac{2\pi e r_0^2}{n}\right)
			-\frac12\log(\pi n)
			+O(n^{-1}) .
		\end{aligned}
		\]
		Therefore,
		\begin{align}
			-\log\operatorname{Vol}\big(\S_{\fc_1}(n,r_0)\big)
			&
			=
			-\frac n2
			\log\left(
			\frac{2\pi e r_0^2}{n}
			\right)
			+\frac12\log(\pi n)
			+O(n^{-1})
			\\[2mm]
			&=
			\frac n2
			\log\left(
			\frac{n}{2\pi e r_0^2}
			\right)
			+\frac12\log(\pi n)
			+O(n^{-1})
			\\[2mm]
			&=
			\frac n2\log n
			-\frac n2\log 2
			-n\log r_0
			-\frac n2\log(\pi e)
			+o(n)
			\\[2mm]
			&=
			\frac n2\log\frac n2
			-n\log r_0
			-\frac n2\log(\pi e)
			+o(n)
		\end{align}
		since $$\frac12\log(\pi n)=o(n)
		\qquad\text{and}\qquad
		O(n^{-1})=o(n).$$
		This completes the proof of Lemma \ref{Lem.Neg_Log_Vol}.
		
	\end{proof}

	\section{Proof of Lemma \ref{Lem.Noise_Ch_Dis_Prod}; Noise-Channel-Distance Product}
	\label{App.Noise_Ch_Dis_Prod}
	
		\begin{proof}
		By the definition of the Frobenius inner product, we have
		\begin{equation}
			\label{Eq.V_ij}
			\begin{aligned}
				V_{i,j}
				&=
				2n^{-1}
				\big\langle
				\mathbf Z,\mathbf d_{i,j}^{H}
				\big\rangle_F
				& = 2n^{-1} \sum_{t=1}^{n}\sum_{\mu=1}^{N_R} Z_{t,\mu} \bar{d}_{i,j,t,\nu}^{H} .
			\end{aligned}
		\end{equation}
		
		We establish the distribution of $V_{i,j}$ in three steps:
		
		\begin{itemize}[leftmargin=*]
			\item $(a)$ \textit{Mean.}
			By linearity of expectation and the zero-mean property of the noise,
			\begin{equation}
				\begin{aligned}
					\mathbb E[V_{i,j}] = 2n^{-1} \sum_{t=1}^{n}\sum_{\mu=1}^{N_R} |\bar{d}_{i,j,t,\nu}^{H}|^{2} \mathbb E[Z_{t,\mu}]= 0.
				\end{aligned}
			\end{equation}
			
			\item $(b)$ \textit{Variance.}
			Since the noise samples $\{Z_{t,\mu}\}$ are independent with
			$\operatorname{Var}(Z_{t,\mu})=\sigma^2$, the variance of the sum is the sum of the variances. Thus,
			\begin{equation}
				\begin{aligned}
					\operatorname{Var}(V_{i,j})
					&=
					4n^{-2}
					\sum_{t=1}^{n}\sum_{\mu=1}^{N_R}
					|\bar{d}_{i,j,t,\nu}^{H}|^{2}
					\operatorname{Var}(Z_{t,\mu})\\
					&=
					4\sigma^2n^{-2}
					\sum_{t=1}^{n}\sum_{\mu=1}^{N_R}
					|\bar{d}_{i,j,t,\nu}^{H}|^{2} \\
					&=
					4\sigma^2n^{-2}
					\|\mathbf d_{i,j}^{H}\|_F^2.
				\end{aligned}
			\end{equation}
			
			\item $(c)$ \textit{Gaussianity.}
			Since, by \eqref{Eq.V_ij}, \(V_{i,j}\) is a deterministic linear combination of independent Gaussian RVs \(\{Z_{t,\mu}\}\), it is Gaussian.
		\end{itemize}
		Now, combining (a)--(c), we obtain $$V_{i,j} \sim \mathcal N\big( 0,,4\sigma^2n^{-2} \|\mathbf d_{i,j}^{H}\|_F^2 \big)$$ which completes the proof.
	\end{proof}

	\section*{}
	\bibliographystyle{IEEEtran-full}
	\bibliography{Lit}
	
\end{document}